\pgfplotsset{width=8cm,compat=1.9}
\DeclareSymbolFont{largesymbolsA}{U}{txexa}{m}{n}
\renewcommand{\cond}{\ensuremath{\mathit{COND}}\xspace}
\newcommand{\cnl}{\ensuremath{\mathit{CNL}}\xspace}
\DeclareMathOperator*{\E}{\mathbb{E}}
\begin{document}

\title{Robust Bounds on Choosing from \\Large Tournaments}

\author{
Christian Saile\\
Technical University of Munich\\Germany
\and
Warut Suksompong\\
University of Oxford\\United Kingdom
}
  
\date{}
  
\maketitle

\begin{abstract}
Tournament solutions provide methods for selecting the ``best'' alternatives from a tournament and have found applications in a wide range of areas. Previous work has shown that several well-known tournament solutions almost never rule out any alternative in large random tournaments. Nevertheless, all analytical results thus far have assumed a rigid probabilistic model, in which either a tournament is chosen uniformly at random, or there is a linear order of alternatives and the orientation of all edges in the tournament is chosen with the same probabilities according to the linear order. In this work, we consider a significantly more general model where the orientation of different edges can be chosen with different probabilities. We show that a number of common tournament solutions, including the top cycle and the uncovered set, are still unlikely to rule out any alternative under this model. This corresponds to natural graph-theoretic conditions such as irreducibility of the tournament. In addition, we provide tight asymptotic bounds on the boundary of the probability range for which the tournament solutions select all alternatives with high probability.
\end{abstract}

\section{Introduction}

Tournaments play an important role in numerous situations as a means of representing entities and a dominance relationship between them. For instance, both the outcome of a round-robin sports competition and the majority relation of voters in an election can be represented by a tournament. A question that occurs frequently is therefore the following: Given a tournament, how can we choose the ``best'' alternatives in a consistent manner? This question has been addressed by a rich and beautiful literature on tournament solutions, which have found applications in areas ranging from sports competitions \citep{Usha76a} to multi-criteria decision analysis \citep{ArRa86a,Bouy04a} to biology \citep{Schj22a,Land53a,Slat61a,AlLe11a}. Over the past half century several tournament solutions have been proposed, two of the oldest and best-known of which are the \emph{top cycle} \citep{Good71a,Schw72a,Mill77a} and the \emph{uncovered set} \citep{Mill80a}.\footnote{For a thorough treatment of tournament solutions, we refer the reader to excellent surveys by \cite{Lasl97a} and \cite{BBH15a}.} 

Given that the purpose of tournament solutions is to discriminate the ``best'' alternatives from the remaining ones, it perhaps comes as a surprise that many common tournament solutions---including the top cycle, the uncovered set, the Banks set, and the minimal covering set---select all alternatives with high probability in a large random tournament \citep{Fey08a,ScFe11a}. Put differently, the aforementioned tournament solutions almost never exclude any alternative in a tournament chosen at random. Nevertheless, these results are based on the \emph{uniform random model}, in which all tournaments are drawn with equal probability, or equivalently each edge is oriented in one direction or the other with equal probability independently of other edges. For a large majority of applications of tournaments, one would not expect that this assumption holds. Indeed, stronger teams are likely to beat weaker teams in a sports competition, and candidates with a large base of support have a higher chance of winning an election. Moreover, real-world tournaments often exhibit a certain degree of transitivity: If alternatives $a$, $b$, and $c$ are such that $a$ dominates $b$ and $b$ dominates $c$, then it is more likely that $a$ dominates $c$ than the other way around.

A more general model of random tournaments is the \emph{Condorcet random model}, previously considered by \cite{Fran68a}, \cite{LRG96a}, \cite{Will10a} and \cite{KSV17a}. In this model, there is a linear order of alternatives, which can be interpreted as an ordering of the alternatives from strongest to weakest. For each pair of alternatives, the probability that the edge is oriented from the alternative that occurs later in the linear order to the alternative that occurs earlier in the linear order is $p$, independently of other pairs of alternatives.\footnote{By symmetry, we may assume without loss of generality that $p\leq 1/2$.} Crucially, the value of $p$ is the same for all pairs of alternatives. The Condorcet random model generalizes the uniform random model, since the latter can be obtained from the former by taking $p=1/2$. \cite{LRG96a} showed that under the Condorcet random model, the top cycle selects all alternatives as long as $p\in\omega(1/n)$. The same authors show furthermore that this bound is tight, that is, the statement no longer holds if $p\in O(1/n)$.\footnote{See, e.g., \cite{CLRS09a} for the definitions of asymptotic notations.}

Although the Condorcet random model addresses the issues raised above with regard to the uniform random model, it is still rather unrealistic for two important reasons. Firstly, in tournaments in the real world, the orientation of different edges are typically determined by different probabilities. For instance, in a sports tournament the probability that a very strong team beats a very weak team is usually higher than the probability that a moderately strong team beats a moderately weak team; a similar phenomenon can be observed in elections. Secondly, even though one can roughly order the alternatives in a tournament according to their strength, it is often the case that not all probabilities of the orientation of the edges respect the ordering. Indeed, this precisely corresponds to the notion of ``bogey teams''---weak teams that nevertheless frequently beat certain supposedly stronger teams. Given the limitations of the uniform random model and the Condorcet random model, it is natural to ask whether previous results continue to hold under more general and realistic models of random tournaments, or whether they break down as soon as we move beyond these restricted models.

In this paper, we show that a number of tournament solutions, including the top cycle and the uncovered set, still choose all alternatives with high probability under a significantly more general model of random tournaments. Unlike the Condorcet random model, our model does not rely on an ordering of the alternatives. Instead, the orientation of each edge is determined by probabilities within the range $[p,1-p]$ for some parameter $p$, and these probabilities are allowed to vary across edges. The only substantive assumption that we make is that the orientations of different edges are chosen independently from one another. Under this model, which is more general than both the uniform random model and the Condorcet random model, we establish in Section~\ref{sec:TC} that the top cycle almost never rules out any alternative as long as $p\in\omega(1/n)$, thus generalizing the result by \cite{LRG96a}. We also show that our bound is asymptotically tight, and that analogous results hold for two other tournament solutions based on the set of Condorcet winners and losers as well. Moreover, we prove in Section~\ref{sec:UC} that the uncovered set is likely to include the whole set of alternatives when $p\in\omega(\sqrt{\log n/n})$. This bound is again asymptotically tight, and the same holds for another tournament solution based on the uncovered set. Since the condition that the top cycle or the uncovered set chooses all alternatives have meaningful graph-theoretic interpretations---the top cycle is the whole set of alternatives if and only if the tournament is strongly connected,\footnote{A strongly connected tournament is also said to be \emph{strong}. Strong connectedness is equivalent to \emph{irreducibility} and to the property of having a Hamiltonian cycle \citep{Moon68a}.} and the uncovered set fails to exclude any alternative exactly when all alternatives are \emph{kings}\footnote{\label{footnote:kings}A \emph{king} is an alternative that can reach any other alternative via a directed path of length at most two \citep{Maur80a}. Therefore, all alternatives of a tournament are kings if and only if every pair of alternatives can reach each other via a directed path of length at most two. Such a tournament has been studied in graph theory and called an \emph{all-kings tournament} \citep{Reid82a}.}---we believe that our results are of independent interest in graph theory and discrete mathematics. Furthermore, the generality of our model allows us to derive consequences in Section~\ref{sec:extensions} for a different model in which tournaments are generated from random voter preferences, and we complement our theoretical results with experimental data in Section~\ref{sec:experiments}. 

\subsection{Related Work}

The study of the behavior of tournament solutions in large random tournaments goes back to \cite{MoMo62a}, who showed that the top cycle almost never rules out any alternative in a large tournament chosen uniformly at random. In fact, they proved a stronger statement that the probability that the top cycle excludes at least one alternative is inverse exponential in the number of alternatives; the estimate was later made more precise by \cite{Moon68a} in his seminal book on tournaments. \cite{Bell81a} also considered the top cycle but assumed that tournaments are generated from the preferences of a large number of voters, each with a uniform random ranking over the alternatives; he likewise found that the top cycle selects all alternatives with high probability under this assumption. \cite{Fey08a} and later \cite{ScFe11a} established results on several tournament solutions including the uncovered set, the Banks set, the Copeland set, the minimal covering set, and the bipartisan set using the uniform random model. While the uncovered set, the Banks set, and the minimal covering set are likely to include all alternatives in a large random tournament, the same event is unlikely to occur for the Copeland set. On the other hand, the bipartisan set chooses on average half of the alternatives in a random tournament of any fixed size \citep{FiRy95a}; it is the unique most discriminating tournament solution satisfying standard properties proposed in the literature \citep{BBSS14a}.

The discriminative power of tournament solutions has also been investigated empirically by \cite{BrSe15a}. Building on the observation that the distributions of real-world tournaments are typically far from uniform, these authors examined the behavior of eleven common tournament solutions on tournaments generated according to stochastic preference models and empirical data. The stochastic models that they used include the impartial culture model, the Mallows mixtures model, and the P\'{o}lya-Eggenberger urn model. They reported that under these more realistic models, most tournament solutions are in fact much more discriminating than the analytical results for uniform random tournaments suggest.

\section{Preliminaries}
\label{sec:prelims}

A tournament $T$ consists of a set $A=\{a_1,a_2,\dots,a_n\}$ of alternatives and a dominance relation. The dominance relation is an asymmetric and connex binary relation on $A$ represented by a directed edge between each unordered pair of distinct alternatives in $A$. We say that alternative $a_i$ \emph{dominates} another alternative $a_j$ if there is an edge from $a_i$ to $a_j$. An alternative is said to be a \emph{Condorcet winner} if it dominates all of the remaining alternatives, and a \emph{Condorcet loser} if it is dominated by all of the remaining alternatives. We extend the dominance relation to sets and say that a set $A'\subseteq A$ of alternatives dominates another set $A''\subseteq A$ of alternatives disjoint from $A'$ if for all $a'\in A'$ and $a''\in A''$, $a'$ dominates $a''$. A tournament is commonly interpreted as the outcome of a round-robin sports competition and as the majority relation of an odd number of voters with linear preferences. In the former interpretation, alternative $a_i$ dominating alternative $a_j$ means that the player or team represented by $a_i$ beats the player or team represented by $a_j$ in the competition. In the latter interpretation, the same dominance relation signifies that more than half of the voters prefer $a_i$ to $a_j$.

We are interested in tournament solutions, which are functions that map each tournament to a nonempty subset of its alternatives, usually referred to as the \emph{choice set}. Two simple tournament solutions are \cond, which chooses a Condorcet winner if one exists and chooses all alternatives otherwise,\footnote{Note that the set of Condorcet winners is not a tournament solution because it can be empty.} and the set of \emph{Condorcet non-losers} (\cnl), which consists of all alternatives that are not Condorcet losers. Other tournament solutions considered in this paper are the following:
\begin{itemize}
\item The \emph{top cycle} ($\tc$) is the (unique) smallest set of alternatives such that all alternatives in the set dominate all alternatives not in the set;
\item The \emph{uncovered set} (\uc) consists of all alternatives that can reach all other alternatives via a domination path of length at most two;\footnote{This is known in graph theory as the set of \emph{kings} (cf. Footnote~\ref{footnote:kings}). An alternative definition, which is also the origin of the name ``uncovered set'', is based on the \emph{covering relation}. An alternative $a_i$ is said to \emph{cover} another alternative $a_j$ if (i) $a_i$ dominates $a_j$, and (ii) any alternative that dominates $a_i$ also dominates $a_j$. The uncovered set corresponds
to the set of alternatives that are not covered by any other alternative.}
\item The \emph{iterated uncovered set} ($\uc^\infty$) is the result of iteratively computing the uncovered set until there is no further reduction.
\end{itemize}
The inclusions $\uc^\infty(T)\subseteq \uc(T)\subseteq \tc(T)\subseteq \cnl(T)$ and $\tc(T)\subseteq \cond(T)$ hold for any tournament $T$.

Next, we describe the random models for generating tournaments that we consider in this paper. We will work with the first model in Sections~\ref{sec:TC} and \ref{sec:UC} and the second model in Section~\ref{sec:extensions}.
\begin{itemize}
\item Model 1: For each pair of distinct alternatives $a_i,a_j$, there is an edge from $a_i$ to $a_j$ with probability $p_{i,j}$ and an edge from $a_j$ to $a_i$ with probability $p_{j,i}=1-p_{i,j}$, independently of other pairs of alternatives.
\item Model 2: There is a constant number $k$ of voters, where $k$ is odd. For each voter $v$ and each pair of distinct alternatives $a_i,a_j$, the voter prefers $a_i$ to $a_j$ with probability $q_{v,i,j}$ and prefers $a_j$ to $a_i$ with probability $q_{v,j,i}=1-q_{v,i,j}$, independently of other voters and other pairs of alternatives.\footnote{One way to interpret the possible intransitivity of the preferences is as a result of noise in the voters' true preferences. \cite{Lasl10a} introduced the term \emph{Rousseauist cultures} for this kind of models.} The majority relation, in which alternative $a_i$ dominates another alternative $a_j$ if and only if more than half of the voters prefer $a_i$ to $a_j$, forms a tournament with $A$ as its set of alternatives.
\end{itemize}
Several models for generating random tournaments considered in previous work are special cases of our models. For example, the \emph{uniform random model} \citep{Fey08a,ScFe11a} corresponds to taking $p_{i,j}=1/2$ for all $i,j$ in Model 1 or taking $q_{v,i,j}=1/2$ for all $v,i,j$ in Model 2 with any $k$. The \emph{Condorcet random model} \citep{Fran68a,LRG96a,Will10a,KSV17a} corresponds to taking $p_{i,j}=p$ for all $i<j$ in Model 1, for some fixed value of $p$. The \emph{Condorcet random model for voters} \citep{BrSe15a} corresponds to taking $q_{v,i,j}=p$ for all $v$ and all $i<j$ in Model 2, for some fixed value of $p$. Following standard terminology, we say that an event occurs ``with high probability'' or ``almost surely'' if the probability that the event occurs converges to 1 as $n$, the number of alternatives, goes to infinity.

We end this section by listing some standard tools for deriving probabilistic bounds. Our first lemma is the Chernoff bound, which gives us an upper bound on the probability that a sum of independent random variables is far away from its expected value.
\begin{lemma}[Chernoff bound] \label{lem:chernoff}
Let $X_1,X_2, \dots, X_r$ be independent random variables that take on values in the interval $[0, 1]$, and let $S=X_1 + X_2 + \cdots + X_r$. For every $\delta\geq 0$, we have
\[\Pr[S \geq (1 + \delta)\E[S]] \leq \exp{\left(\frac{-\delta^2\E[S]}{3}\right)}.\]
\end{lemma}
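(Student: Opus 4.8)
The plan is to use the exponential moment method. Fix a parameter $t>0$. Since $x\mapsto e^{tx}$ is increasing, the event $\{S\geq(1+\delta)\E[S]\}$ coincides with $\{e^{tS}\geq e^{t(1+\delta)\E[S]}\}$, and Markov's inequality applied to the nonnegative random variable $e^{tS}$ yields
\[
\Pr[S\geq(1+\delta)\E[S]]\leq\frac{\E[e^{tS}]}{e^{t(1+\delta)\E[S]}}.
\]
The remainder of the argument consists of bounding the moment generating function $\E[e^{tS}]$ from above and then choosing $t$ to make the resulting estimate as small as possible.

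First I would exploit independence: because $X_1,\dots,X_r$ are independent, so are $e^{tX_1},\dots,e^{tX_r}$, whence $\E[e^{tS}]=\prod_{i=1}^r\E[e^{tX_i}]$. To control a single factor I would use that $X_i\in[0,1]$ together with the convexity of $x\mapsto e^{tx}$: on $[0,1]$ a convex function lies below the chord through its endpoints, giving the pointwise bound $e^{tX_i}\leq 1+(e^t-1)X_i$. Taking expectations and applying the elementary inequality $1+y\leq e^y$ gives $\E[e^{tX_i}]\leq 1+(e^t-1)\E[X_i]\leq\exp\bigl((e^t-1)\E[X_i]\bigr)$. Writing $\mu=\E[S]=\sum_{i=1}^r\E[X_i]$ and multiplying the factors, I obtain $\E[e^{tS}]\leq\exp\bigl((e^t-1)\mu\bigr)$.

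Combining the two displays shows that, for every $t>0$,
\[
\Pr[S\geq(1+\delta)\mu]\leq\exp\bigl((e^t-1)\mu-t(1+\delta)\mu\bigr).
\]
I would then optimize the exponent over $t$: differentiating $(e^t-1)-t(1+\delta)$ and setting the derivative to zero gives the optimal choice $t=\ln(1+\delta)$, which is positive precisely because $\delta>0$ (the case $\delta=0$ being trivial, as the claimed bound is then $\leq 1$). Substituting this value produces the sharp Chernoff-type estimate
\[
\Pr[S\geq(1+\delta)\mu]\leq\left(\frac{e^\delta}{(1+\delta)^{1+\delta}}\right)^{\mu}=\exp\bigl(\mu\bigl(\delta-(1+\delta)\ln(1+\delta)\bigr)\bigr).
\]

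The main obstacle is the final step of weakening this sharp bound into the stated, more convenient form $\exp(-\delta^2\mu/3)$. This reduces to the purely scalar inequality $\delta-(1+\delta)\ln(1+\delta)\leq-\delta^2/3$ on the range of $\delta$ to which the lemma is applied. I would verify it by setting $g(\delta)=\delta-(1+\delta)\ln(1+\delta)+\delta^2/3$, noting $g(0)=0$, and analyzing the sign of $g$ through its derivatives $g'(\delta)=\frac{2}{3}\delta-\ln(1+\delta)$ and $g''(\delta)=\frac{2}{3}-\frac{1}{1+\delta}$; controlling the sign of $g'$ via this second derivative then forces $g(\delta)\leq 0$. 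I expect this analytic verification, rather than the probabilistic machinery, to be the delicate part, since the constant $3$ is exactly what is needed to make the comparison hold.
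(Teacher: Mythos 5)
The paper never proves Lemma~\ref{lem:chernoff}; it is quoted as a standard tool, so the only meaningful comparison is with the canonical argument, which is exactly the route you take. Your probabilistic machinery is correct as far as it goes: Markov's inequality applied to $e^{tS}$, the chord bound $e^{tx}\leq 1+(e^t-1)x$ on $[0,1]$, independence, and the optimal choice $t=\ln(1+\delta)$ all combine correctly to give the sharp estimate $\Pr[S\geq(1+\delta)\mu]\leq\exp\left(\mu\left(\delta-(1+\delta)\ln(1+\delta)\right)\right)$, where $\mu=\E[S]$.

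The gap is in the final comparison step, and it cannot be closed in the generality you (and the paper) claim. The scalar inequality $\delta-(1+\delta)\ln(1+\delta)\leq-\delta^2/3$ is \emph{false} for large $\delta$: your function $g(\delta)=\delta-(1+\delta)\ln(1+\delta)+\delta^2/3$ tends to $+\infty$, since $\delta^2/3$ dominates $(1+\delta)\ln(1+\delta)$, which grows only like $\delta\ln\delta$; concretely, $g(10)=10-11\ln 11+100/3\approx 17>0$. Your own derivative analysis reveals this when pushed: $g'(\delta)=\tfrac{2}{3}\delta-\ln(1+\delta)$ is negative only up to $\delta\approx 1.2$ and $g$ itself turns positive just above $\delta\approx 1.8$, so the claim ``this forces $g(\delta)\leq 0$'' holds only on a bounded range, not for all $\delta\geq 0$. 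Moreover, this is not a defect you could have engineered around: the lemma as stated, with constant $3$ and every $\delta\geq 0$, is itself false. Take $X_i$ i.i.d.\ Bernoulli$(1/r)$ with $r$ large, so $\mu=1$, and $\delta=20$: then $\Pr[S\geq 21]\rightarrow\Pr[\mathrm{Poisson}(1)\geq 21]\geq e^{-1}/21!\approx e^{-46}$, whereas the claimed bound is $e^{-400/3}\approx e^{-133}$. The correct statements are $\Pr[S\geq(1+\delta)\mu]\leq\exp(-\delta^2\mu/3)$ for $0\leq\delta\leq 1$ (on that range your argument completes, exactly as you outline, using $g'(1)=\tfrac{2}{3}-\ln 2<0$), or $\exp\left(-\delta^2\mu/(2+\delta)\right)$ for all $\delta\geq 0$. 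Either repair is harmless for the paper: its only application of the lemma, in the proof of Theorem~\ref{thm:UCnotall}, uses $\delta=\tfrac{0.61}{0.6}-1=\tfrac{1}{60}$, comfortably inside $[0,1]$. So you should restrict the range of $\delta$ (or weaken the exponent) before finishing; as written, the final verification step would fail.
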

The next two lemmas allow us to estimate the expression $1+x$ from above and below.
\begin{lemma}[Bernoulli's inequality] \label{lem:bernoulli}
For all real numbers $r\geq 1$ and $x\geq -1$, we have
\[(1+x)^r \geq 1+rx.\]
\end{lemma}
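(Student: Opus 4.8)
The plan is to reduce the inequality to the statement that a suitable auxiliary function attains its global minimum at a single point. Define
\[
f(x) = (1+x)^r - (1+rx)
\]
on the domain $x \geq -1$; the claim is exactly that $f(x) \geq 0$ throughout this domain. The first observation is that $f(0) = 1 - 1 = 0$, so the inequality holds with equality at $x = 0$, and it suffices to show that $x = 0$ is a global minimizer of $f$.

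The main step is a sign analysis of the derivative. For $x > -1$ we compute $f'(x) = r(1+x)^{r-1} - r = r\bigl[(1+x)^{r-1} - 1\bigr]$, and here the hypothesis $r \geq 1$ enters in the form $r - 1 \geq 0$. When $x \geq 0$ we have $1 + x \geq 1$, so $(1+x)^{r-1} \geq 1$ and hence $f'(x) \geq 0$; when $-1 < x < 0$ we have $0 < 1 + x < 1$, so $(1+x)^{r-1} \leq 1$ and hence $f'(x) \leq 0$. Thus $f$ is non-increasing on $[-1, 0]$ and non-decreasing on $[0, \infty)$, which forces $f(x) \geq f(0) = 0$ everywhere. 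Conceptually this is just the tangent-line characterization of convexity: the map $t \mapsto t^r$ is convex on $[0,\infty)$ for $r \geq 1$, so it lies above its tangent line at $t = 1$, namely $y = 1 + r(t-1)$, and substituting $t = 1 + x$ recovers the claim.

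The only genuine care needed is at the edges of the domain, which is where I expect the routine argument to require a separate word rather than posing any real difficulty. At the boundary $x = -1$ the derivative formula breaks down because $1 + x = 0$, so I would verify this point directly: $(1+x)^r = 0^r = 0$ while $1 + rx = 1 - r \leq 0$ since $r \geq 1$, so the inequality $0 \geq 1 - r$ holds. The degenerate case $r = 1$ gives equality everywhere and can be noted in passing. With these boundary checks the monotonicity argument yields $f \geq 0$ on all of $[-1, \infty)$, completing the proof.
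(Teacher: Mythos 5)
The paper does not prove this lemma at all: Bernoulli's inequality is stated in Section~\ref{sec:prelims} as one of the ``standard tools for deriving probabilistic bounds,'' with no proof given in the body or the appendix. So there is no paper argument to compare yours against; what matters is only whether your proof stands on its own, and it does. Your sign analysis of $f(x) = (1+x)^r - (1+rx)$ is the standard calculus argument, and you correctly identify and handle the two places where care is needed: the boundary $x = -1$, where the derivative formula $f'(x) = r\bigl[(1+x)^{r-1}-1\bigr]$ is unavailable and a direct check gives $0 \geq 1-r$, and the hypothesis $r \geq 1$, which is exactly what makes $(1+x)^{r-1}$ lie on the correct side of $1$ in each of the two regimes $x \geq 0$ and $-1 < x < 0$. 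The closing remark that this is the tangent-line characterization of convexity of $t \mapsto t^r$ at $t=1$ is an accurate conceptual summary of the same computation. No gaps.
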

\begin{lemma}
\label{lem:oneplusx}
For all real numbers $x$, we have $1+x\leq e^x$.
\end{lemma}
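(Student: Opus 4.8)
The plan is to reduce the inequality to controlling the sign of a single auxiliary function via elementary calculus. Specifically, I would define $f(x) = e^x - (1+x)$ and aim to show that $f(x)\geq 0$ for every real $x$, which is exactly equivalent to the claimed bound $1+x\leq e^x$.

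First I would compute the derivative $f'(x) = e^x - 1$. Since $e^x$ is strictly increasing with $e^0 = 1$, the sign of $f'(x)$ is determined entirely by the sign of $x$: we have $f'(x) < 0$ for $x < 0$, $f'(x) = 0$ at $x = 0$, and $f'(x) > 0$ for $x > 0$. Hence $f$ is strictly decreasing on $(-\infty, 0)$ and strictly increasing on $(0, \infty)$, so it attains its global minimum at $x = 0$. Evaluating there gives $f(0) = e^0 - (1+0) = 0$, and since this is the global minimum value we conclude $f(x)\geq 0$ for all $x$, which rearranges to the desired inequality.

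I do not anticipate any genuine obstacle, as the statement is a textbook consequence of the convexity of the exponential function. The only point requiring a moment's care is confirming that $x = 0$ yields a \emph{global} rather than merely local minimum, which the monotonicity analysis above settles. An equally clean alternative would be to appeal to convexity directly: since $f''(x) = e^x > 0$, the graph of $e^x$ lies above its tangent line $y = 1 + x$ at the point of tangency $x = 0$, yielding $e^x \geq 1 + x$ immediately.
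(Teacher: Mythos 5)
Your proof is correct: the derivative analysis of $f(x)=e^x-(1+x)$ correctly identifies $x=0$ as the global minimum with $f(0)=0$, which gives the inequality. The paper states this lemma as a standard fact without providing any proof, so your elementary calculus argument (or the equivalent tangent-line/convexity observation you mention) is exactly the kind of routine verification the authors omitted.
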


\section{Top Cycle}
\label{sec:TC}

In this section, we consider the top cycle. We show that when each probability $p_{i,j}$ is between $f(n)$ and $1-f(n)$ for some function $f(n)\in\omega\left(1/n\right)$, $\tc$ chooses all alternatives with high probability (Theorem~\ref{thm:TCall}). By using the inclusion relationships between $\tc$, \cond, and \cnl, we obtain analogous statements for \cond and \cnl. We also show that our results are asymptotically tight---for all three tournament solutions, the statement ceases to hold if $f(n)\in O\left(1/n\right)$ (Theorem~\ref{thm:TCnotall}). 

We begin with our main result of the section.

\begin{theorem}
\label{thm:TCall}
Let $f:\mathbb{Z}^+\rightarrow\mathbb{R}_{\geq 0}$ be a function such that $f(n)\leq 1/2$ for all $n$ and $f(n)\in\omega\left(1/n\right)$. Assume that a tournament $T$ is generated according to Model 1, and that
\[p_{i,j}\in\left[f(n),1-f(n)\right]\] 
for all $i\neq j$. Then with high probability, $\tc(T)=A$.
\end{theorem}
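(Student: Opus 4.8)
The plan is to show that $T$ is strongly connected with high probability, since $\tc(T) = A$ precisely when $T$ is strong. I would prove the contrapositive of strong connectivity by bounding the probability that $T$ is \emph{not} strong. If $T$ is not strong, then its vertex set admits a nontrivial partition $A = B \cup C$ with $B$ dominating $C$ entirely, i.e., every edge between $B$ and $C$ points from $B$ to $C$. Equivalently, there is some nonempty proper subset $B \subsetneq A$ such that every alternative in $B$ dominates every alternative in $A \setminus B$. So the strategy is a union bound over all such ``bad'' sets $B$, estimating for each the probability that all edges leaving $B$ are oriented outward.

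\emph{First}, I would fix a subset $B$ of size $k$ (with $1 \le k \le n-1$) and write $C = A \setminus B$, so $|C| = n-k$. The event that $B$ dominates $C$ requires all $k(n-k)$ edges between the two parts to be oriented from $B$ to $C$. By independence, the probability of this event is $\prod p_{i,j}$ over the relevant pairs, where each factor satisfies $p_{i,j} \le 1 - f(n)$. Hence this probability is at most $(1 - f(n))^{k(n-k)}$. Using Lemma~\ref{lem:oneplusx} in the form $1 - f(n) \le e^{-f(n)}$, I bound this by $\exp\!\left(-f(n)\,k(n-k)\right)$.

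\emph{Next}, I would sum over all choices of $B$. By symmetry it suffices to consider partitions, but taking a clean union bound over all $2^n - 2$ nonempty proper subsets, or more carefully over subset sizes $k$ with a $\binom{n}{k}$ factor, gives
\[
\Pr[T \text{ not strong}] \le \sum_{k=1}^{n-1} \binom{n}{k} \exp\!\left(-f(n)\,k(n-k)\right).
\]
The key observation is that for each $k$ in this range, $k(n-k) \ge (n-1)$ (the minimum occurs at the endpoints $k=1$ and $k=n-1$), so every exponential term is at most $\exp(-f(n)(n-1))$. Meanwhile $\binom{n}{k} \le 2^n = \exp(n \ln 2)$, and there are only $n-1$ terms. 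Combining these, the whole sum is at most $n \cdot \exp\!\left(n \ln 2 - f(n)(n-1)\right)$.

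\emph{Finally}, I would invoke the hypothesis $f(n) \in \omega(1/n)$, which means $n f(n) \to \infty$, equivalently $f(n)(n-1)$ grows faster than any constant multiple of $n$; in particular $f(n)(n-1) - n\ln 2 \to \infty$, so the bound tends to $0$ and $T$ is strong with high probability. I anticipate the main delicacy is the bookkeeping in the union bound: a naive term-by-term bound using only $k(n-k) \ge n-1$ is too weak because the $\binom{n}{k}$ factor is as large as $\binom{n}{n/2} \approx 2^n$, which is not dominated by $\exp(-f(n) \cdot n-1)$ when $f(n)$ is merely $\omega(1/n)$ but, say, smaller than $\ln 2$. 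The fix is to exploit the full strength of $k(n-k)$, which is of order $n^2/4$ in the middle range where $\binom{n}{k}$ is large, so that $f(n) \cdot k(n-k) \in \omega(n) \gg n\ln 2$ there; the binomial factor is only genuinely competitive near the endpoints $k \in \{1, n-1\}$, where $k(n-k) \approx n$ but $\binom{n}{k} \approx n$ is merely polynomial. Handling these two regimes separately—or bundling them via a single inequality that tracks $k(n-k)$ against $\log\binom{n}{k}$—is where the argument must be done carefully rather than with the crudest possible estimate.
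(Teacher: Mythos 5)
Your setup---reducing $\tc(T)=A$ to strong connectivity, taking a union bound over dominant sets $B$, and using independence to bound $\Pr[B \text{ dominates } A\setminus B]$ by $(1-f(n))^{k(n-k)}$---is correct and matches the paper's opening steps. The genuine gap is at the endpoints of the union bound, exactly the regime you dismiss as harmless. At $k=1$ your bound is $n\left(1-f(n)\right)^{n-1}\approx n e^{-f(n)(n-1)}$, and the hypothesis $f(n)\in\omega(1/n)$ alone does not make this vanish: for $f(n)=\log\log n/n$, which satisfies the hypothesis, the bound is roughly $n/\log n\rightarrow\infty$. Your remark that near $k\in\{1,n-1\}$ the binomial coefficient is ``merely polynomial'' is precisely backwards: a polynomial factor is exactly what $e^{-f(n)\cdot\Theta(n)}$ fails to absorb unless $f(n)\geq C\log n/n$. (Your earlier assertion that $f(n)(n-1)-n\ln 2\rightarrow\infty$ is likewise false whenever $f(n)$ stays below $\ln 2$.) So your argument establishes the theorem only under the stronger hypothesis $f(n)=\Omega(\log n/n)$---roughly the threshold of Theorem~\ref{thm:UCall} for the uncovered set, not the $\omega(1/n)$ threshold claimed here for the top cycle. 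Moreover, this cannot be repaired by better bookkeeping: the sum $\sum_{k=1}^{n-1}\binom{n}{k}(1-f(n))^{k(n-k)}$ genuinely tends to infinity in the regime where $f(n)\in\omega(1/n)$ but $f(n)\in o(\log n/n)$, so any argument that bounds each set's domination probability by its individual worst case is doomed.

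The missing idea is that the worst cases for different sets $B$ are incompatible and must be optimized jointly. Even at $k=1$, where the events are disjoint and the failure probability equals $\sum_i\prod_{j\neq i}p_{i,j}$ exactly, setting $p_{1,j}=1-f(n)$ for all $j$ (to make $a_1$ a likely Condorcet winner) forces $p_{j,1}=f(n)$ for every $j\neq 1$, so every other product carries a suppressing factor $f(n)$; consequently the true maximum of the sum carries no factor of $n$. The paper exploits this by observing that the union-bound sum is multilinear in the variables $p_{i,j}$ ($i<j$), hence maximized at a vertex of the cube where each $p_{i,j}\in\{c/n,\,1-c/n\}$; each vertex corresponds to a tournament $U$, and Lemmas~\ref{lem:karamata}, \ref{lem:majorization-sum}, and \ref{lem:tournament-majorize} (Karamata's inequality together with majorization of outdegree sequences) show that the maximizing vertex is the transitive tournament. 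In that configuration the size-$k$ sum is dominated by the single set $\{a_1,\dots,a_k\}$, with the remaining sets contributing a geometrically decaying series, which yields the bound $16ce^{-\frac{c}{2}}$ with no factor of $n$ and hence the full $\omega(1/n)$ result. This joint optimization over configurations is the core of the paper's proof and is absent from your proposal.
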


Theorem~\ref{thm:TCall} generalizes a result by \cite{LRG96a} that establishes the claim for the case where $p_{i,j}=f(n)$ for all $i<j$ (or, by symmetry, the case where $p_{i,j}=1-f(n)$ for all $i<j$). We remark that their proof relies crucially on the assumption that there is a linear order of alternatives and all edges are more likely to be oriented in one direction than in the other direction according to the order. Indeed, this assumption allows the authors to show that with high probability, any alternative can be reached by the strongest alternative and can reach the weakest alternative via a domination path of length at most two each. Moreover, with the assumption $f(n)\in\omega\left(1/n\right)$ one can show that the weakest alternative can almost surely reach the strongest alternative via a domination path of length four, thus establishing the strong connectivity of the tournament. In contrast, we do not assume that the edges in the tournament are likely to be oriented in one direction or the other. As such, we will need a completely different approach for our proof.

Before we go into the proof of Theorem~\ref{thm:TCall}, we first give a high-level overview. We observe that $\tc(T)\neq A$ exactly when there exists a proper, nontrivial subset of alternatives $B$ that dominates the complement set of alternatives $A\backslash B$. Using the union bound, we then upper bound the probability that $\tc(T)\neq A$ by the sum over all sets $B$ of the probabilities that $B$ dominates $A\backslash B$. This sum can be written entirely in terms of the variables $p_{i,j}$ for $i<j$ and is moreover linear in all of these variables, implying that its maximum is attained when all variables take on a value at one of the two boundaries of their domain. Using a number of helper lemmas (Lemmas~\ref{lem:karamata}, \ref{lem:majorization-sum}, and \ref{lem:tournament-majorize}), we show that the sum is in fact maximized when all variables take on a value at the same boundary. This allows us to bound the sum directly by plugging in the value at a boundary and complete the proof.

In what follows, we assume that $\textbf{x}=(x_1,x_2,\dots,x_n)$ and $\textbf{y}=(y_1,y_2,\dots,y_n)$ are vectors of nonnegative integers with $n$ components. We start by defining majorization, a preorder on vectors that we will use frequently in our proof.

\begin{definition}[Majorization]
\label{def:majorization}
For a vector $\textbf{x}$, let $\textbf{x}^\downarrow=(x_1^\downarrow,x_2^\downarrow,\dots,x_n^\downarrow)$ be the vector with the same components, but sorted in descending order. Given two vectors $\textbf{x},\textbf{y}$, we say that $\textbf{x}$ \emph{majorizes} $\textbf{y}$, and write $\textbf{x}\succ\textbf{y}$, if the following two conditions are satisfied:
\begin{enumerate}[label=(\roman*)]
\item $\sum_{i=1}^j x_i^\downarrow\geq \sum_{i=1}^j y_i^\downarrow$ for $j=1,2,\dots,n-1$;
\item $\sum_{i=1}^n x_i=\sum_{i=1}^n y_i$.
\end{enumerate}
\end{definition}

When one vector majorizes another vector, Karamata's inequality allows us to compare the sum of an arbitrary convex function at the components of one vector to the corresponding sum of the other vector.

\begin{lemma}[Karamata's inequality]
\label{lem:karamata}
Let $f:\mathbb{Z}_{\geq 0}\rightarrow\mathbb{R}$ be a convex function, and let $\textbf{x},\textbf{y}$ be vectors with $n$ components such that $\textbf{x}\succ\textbf{y}$. Then 
\[\sum_{i=1}^n f(x_i) \geq \sum_{i=1}^n f(y_i).\]
\end{lemma}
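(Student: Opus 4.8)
The plan is to prove the inequality via \emph{summation by parts} (Abel summation), the standard route to Karamata's inequality. First I would observe that neither side of the desired inequality depends on the order in which the components appear, so without loss of generality I may assume that both $\textbf{x}$ and $\textbf{y}$ are already sorted in descending order; that is, $x_1\geq x_2\geq\cdots\geq x_n$ and $y_1\geq y_2\geq\cdots\geq y_n$. Writing $A_j=\sum_{i=1}^j x_i$ and $B_j=\sum_{i=1}^j y_i$, the majorization hypothesis $\textbf{x}\succ\textbf{y}$ translates into $A_j\geq B_j$ for every $j<n$ together with $A_n=B_n$.

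Next I would introduce, for each index $i$ with $x_i\neq y_i$, the secant slope
\[c_i=\frac{f(x_i)-f(y_i)}{x_i-y_i},\]
so that $f(x_i)-f(y_i)=c_i(x_i-y_i)$ holds termwise. Setting $D_j=A_j-B_j$, so that $D_j\geq 0$ for $j<n$ while $D_0=D_n=0$, Abel summation rewrites the quantity I want to bound as
\[\sum_{i=1}^n\bigl(f(x_i)-f(y_i)\bigr)=\sum_{i=1}^n c_i(x_i-y_i)=\sum_{i=1}^{n-1}(c_i-c_{i+1})\,D_i,\]
where the boundary contributions drop out precisely because $D_0=D_n=0$. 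Since every $D_i$ is nonnegative, it then suffices to show that the sequence $(c_i)$ is nonincreasing, for in that case each summand is a product of two nonnegative factors and the total is nonnegative.

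The monotonicity $c_i\geq c_{i+1}$ is exactly where convexity enters: for a convex function the slope of the secant through two points is a nondecreasing function of each of its endpoints, and since $x_i\geq x_{i+1}$ and $y_i\geq y_{i+1}$, the secant through $(x_i,y_i)$ has slope at least that of the secant through $(x_{i+1},y_{i+1})$. I expect this to be the main obstacle, for two reasons. First, it relies on the precise claim that secant slopes of a convex function are monotone in their endpoints, which I would establish directly from the definition of convexity on $\mathbb{Z}_{\geq 0}$. Second, the formula for $c_i$ breaks down at any index where $x_i=y_i$; there the corresponding term $f(x_i)-f(y_i)$ is zero, so I am free to \emph{define} $c_i$ to be any value between the adjacent discrete left and right slopes of $f$ at that common point, and convexity guarantees that such a value can be chosen so that the full sequence $(c_i)$ remains nonincreasing. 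Dealing cleanly with this degenerate case, rather than the routine summation-by-parts bookkeeping, is the crux of the argument.
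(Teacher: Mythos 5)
The paper never proves this lemma: Karamata's inequality is invoked as an off-the-shelf tool (alongside the Chernoff bound and Bernoulli's inequality), and the appendix only supplies proofs of Lemmas~\ref{lem:majorization-sum} and \ref{lem:tournament-majorize}. So there is no paper proof to compare against, and your proposal must stand on its own --- which it does. What you describe is the classical Abel-summation proof of Karamata's inequality: sort both vectors, write $\sum_i \bigl(f(x_i)-f(y_i)\bigr)=\sum_i c_i(x_i-y_i)$ with $c_i$ the secant slopes, sum by parts against the partial-sum gaps $D_i\geq 0$ (with $D_0=D_n=0$), and conclude from the monotonicity of secant slopes of a convex function. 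You have also correctly identified the only delicate point: $c_i$ is undefined when $x_i=y_i$, and one must assign it a subgradient value keeping $(c_i)$ nonincreasing. Your fix is valid; a slightly cleaner uniform choice is to always take the right discrete slope $c_i=f(t+1)-f(t)$ when $x_i=y_i=t$, since for any pair of abscissas $\leq t$ (not both equal to $t$) the secant slope is at most $f(t)-f(t-1)\leq c_i$, and symmetrically from above; this choice also sidesteps the fact that at $t=0$ the domain $\mathbb{Z}_{\geq 0}$ has no left slope. Finally, the secant-slope monotonicity you rely on does need a proof on the discrete domain, but it is immediate once one writes the slope between $a<b$ as the average of the successive differences $f(j+1)-f(j)$ over $a\leq j<b$, which are nondecreasing by convexity.
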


We next show that if one vector majorizes another vector, then an analogous statement holds for the two vectors that arise from taking the sum of all subsets with any fixed number of components of the original vectors.

\begin{definition}
Let $n$ be a positive integer and $k\in\{1,2,\dots,n\}$. For a vector $\textbf{x}$ with $n$ components, define $\textbf{x}^{(k)}$ to be the vector with $\binom{n}{k}$ components consisting of all sums of $k$ distinct components of $\textbf{x}$ in nonincreasing order.
\end{definition}

For example, if $n=4$ and $\textbf{x}=(2,4,5,7)$, then $\textbf{x}^{(2)}=(12,11,9,9,7,6)$.

\begin{lemma}
\label{lem:majorization-sum}
If two vectors $\textbf{x},\textbf{y}$ with $n$ components are such that $\textbf{x}\succ\textbf{y}$, then we also have $\textbf{x}^{(k)}\succ \textbf{y}^{(k)}$ for all $k=1,2,\dots,n$.
\end{lemma}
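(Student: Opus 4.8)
The plan is to verify the two conditions of Definition~\ref{def:majorization} for the pair $\mathbf{x}^{(k)},\mathbf{y}^{(k)}$. Condition (ii) is immediate from a counting argument: each coordinate $x_i$ occurs in exactly $\binom{n-1}{k-1}$ of the $k$-element subsets, so the total of all components of $\mathbf{x}^{(k)}$ equals $\binom{n-1}{k-1}\sum_{i=1}^n x_i$, and identically for $\mathbf{y}^{(k)}$; the two totals agree because $\sum_{i=1}^n x_i=\sum_{i=1}^n y_i$ by the assumption $\mathbf{x}\succ\mathbf{y}$.

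For condition (i) I would reduce to a single elementary step. Recall the standard discrete characterization of majorization: for integer vectors, $\mathbf{x}\succ\mathbf{y}$ holds precisely when $\mathbf{y}$ can be reached from $\mathbf{x}$ by a finite sequence of unit \emph{Robin Hood transfers}, each decreasing one component by one and increasing a strictly smaller component by one in such a way that the first remains at least as large as the second. Since majorization is transitive and the operation $\mathbf{z}\mapsto\mathbf{z}^{(k)}$ is applied at every link of such a chain, it suffices to treat one transfer. Suppose then that $\mathbf{y}$ arises from $\mathbf{x}$ by moving a unit from coordinate $a$ to coordinate $b$ with $x_a\geq x_b+2$. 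Classifying the $k$-subsets $S$ by the intersection $S\cap\{a,b\}$, those containing both or neither of $a,b$ have unchanged sums, while those containing exactly one of them split into disjoint pairs $R\cup\{a\}$ and $R\cup\{b\}$ indexed by the $(k-1)$-subsets $R$ of the remaining $n-2$ coordinates. Writing $v=\sum_{i\in R}x_i$, the transfer sends the pair of values $(v+x_a,\,v+x_b)$ to $(v+x_a-1,\,v+x_b+1)$, which is again a Robin Hood transfer on two components of $\mathbf{x}^{(k)}$. Thus $\mathbf{y}^{(k)}$ is obtained from $\mathbf{x}^{(k)}$ by Robin Hood transfers acting on pairwise disjoint pairs of coordinates, and performing these one at a time yields $\mathbf{x}^{(k)}\succ\mathbf{y}^{(k)}$ by transitivity.

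The crux lies in this last step, and two points deserve care. First, the hypothesis $x_a\geq x_b+2$ is exactly what guarantees $v+x_a-1\geq v+x_b+1$, so that each induced pairwise move preserves the relative order of its two values and is therefore a legitimate Robin Hood transfer rather than an order-reversing swap. Second, one must invoke the basic fact that a single Robin Hood transfer produces a majorized vector; this is elementary and follows from directly comparing the sorted partial sums before and after the transfer. By contrast, the counting for condition (ii) and the reduction to the single-transfer case are routine.
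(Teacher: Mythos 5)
Your proposal is correct and takes essentially the same route as the paper: the transfer characterization you cite as standard is exactly the paper's Lemma~\ref{lem:majorization-equivalence} (there the moves are called \emph{equalizing moves} and the characterization is proved from scratch by induction rather than invoked), and your pairing of the $k$-subsets $R\cup\{a\}$ and $R\cup\{b\}$ to show that one transfer on $\mathbf{x}$ induces transfers on pairwise disjoint pairs of components of $\mathbf{x}^{(k)}$ is precisely the paper's argument. The remaining differences are cosmetic: your separate counting argument for condition (ii) is redundant (the transfer argument already yields full majorization), and your insistence on the gap $x_a\geq x_b+2$ is a harmless strengthening of the paper's move, which allows gap one since such a move merely permutes values.
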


For the sake of continuity, we leave the proof of Lemma~\ref{lem:majorization-sum} along with that of the next lemma to the appendix.

Our final lemma shows that the outdegree vector of a transitive tournament majorizes the corresponding vector of any tournament. Given a tournament $T$ and alternative $a$ in the tournament, denote by $\deg_T(a)$ the outdegree of $a$ in $T$.

\begin{lemma}
\label{lem:tournament-majorize}
Let $W$ be a transitive tournament on alternatives $d_1,d_2,\dots,d_n$, where $d_i$ dominates $d_j$ for all $i<j$. For any tournament $U$ with alternatives $b_1,b_2,\dots,b_n$, we have
\[(\deg_W(d_1),\deg_W(d_2),\dots,\deg_W(d_n))\succ (\deg_U(b_1),\deg_U(b_2),\dots,\deg_U(b_n)).\]
\end{lemma}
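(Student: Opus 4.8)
The plan is to verify the two defining conditions of majorization from Definition~\ref{def:majorization} directly. First I would record the outdegree sequence of the transitive tournament $W$: since $d_i$ dominates $d_j$ precisely when $i<j$, the outdegree of $d_i$ is $n-i$, so the sequence sorted in descending order is $(n-1,n-2,\dots,1,0)$. Condition (ii) is then immediate, because in any tournament on $n$ alternatives every one of the $\binom{n}{2}$ edges contributes exactly one to the total outdegree; hence $\sum_{i=1}^n \deg_W(d_i)=\sum_{i=1}^n \deg_U(b_i)=\binom{n}{2}$.

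For condition (i) I need to show that for each $j\in\{1,\dots,n-1\}$ the sum of the $j$ largest outdegrees of $U$ is at most the corresponding sum for $W$. A short computation gives $\sum_{i=1}^{j}(n-i)=jn-\binom{j+1}{2}=\binom{n}{2}-\binom{n-j}{2}$ for the transitive tournament. So it suffices to prove that the sum of the $j$ largest outdegrees of $U$ never exceeds $\binom{n}{2}-\binom{n-j}{2}$; equivalently, subtracting from the fixed total $\binom{n}{2}$, that the sum of the $n-j$ smallest outdegrees of $U$ is at least $\binom{n-j}{2}$.

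The key step, which I would isolate, is a simple edge-counting fact: for any subset $S$ of $m$ alternatives, $\sum_{a\in S}\deg_U(a)\geq\binom{m}{2}$, since each of the $\binom{m}{2}$ edges induced among the alternatives of $S$ is directed out of some alternative of $S$ and therefore contributes to the left-hand side. Applying this with $S$ taken to be the set of alternatives realizing the $n-j$ smallest outdegrees (so $m=n-j$) yields exactly the required lower bound $\binom{n-j}{2}$, completing the verification of condition (i) and hence the lemma.

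I do not expect a serious obstacle here; the only point requiring care is keeping the sorting directions straight, since majorization compares descending-sorted partial sums whereas the clean inequality is most naturally phrased in terms of the \emph{smallest} outdegrees, so I pass between the two views using the fixed total $\binom{n}{2}$. I note in passing that the needed inequality is equivalent to Landau's classical characterization of tournament score sequences, but the self-contained edge-counting argument above avoids having to invoke it.
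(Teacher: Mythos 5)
Your proof is correct and takes essentially the same approach as the paper's: both verify the two majorization conditions directly, with condition (ii) following from the fixed edge total $\binom{n}{2}$ and condition (i) from an edge-counting bound on an extreme-degree subset. The only (cosmetic) difference is that the paper upper-bounds the sum of the $k$ largest outdegrees by $\binom{k}{2}+k(n-k)$ (edges inside the top set plus edges leaving it), whereas you lower-bound the sum of the $n-j$ smallest outdegrees by $\binom{n-j}{2}$ and subtract from the total; these are the same count since $\binom{n}{2}-\binom{n-j}{2}=\binom{j}{2}+j(n-j)$.
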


With Lemmas~\ref{lem:karamata}, \ref{lem:majorization-sum}, and \ref{lem:tournament-majorize} in hand, we are now ready to prove Theorem~\ref{thm:TCall}.

\begin{proof}[Proof of Theorem~\ref{thm:TCall}]
Let $c\geq 10$ be a constant. Since $f(n)=\omega\left(1/n\right)$, there exists $N'$ such that $f(n)\geq c/n$ for all $n\geq N'$. Let $N=\max(N',4c)$. We will show that for $n\geq N$, the probability that \tc does not choose the whole set of alternatives is at most $16ce^{-\frac{c}{2}}$. Since the expression converges to 0 as $c$ approaches infinity, this will establish the desired result.

Assume that $n\geq N$. Observe that $\tc(T)\neq A$ exactly when there is a proper, nontrivial set of alternatives that dominate the complement set of alternatives. Hence
\begin{align*}
\Pr[\tc(T)\neq A]
&= \Pr[B\text{ dominates } A\backslash B \text{ for some } \emptyset\neq B\subset A] \\
&\leq \sum_{k=1}^{n-1}\sum_{\substack{B\subset A \\ |B|=k}} \Pr[B\text{ dominates } A\backslash B] \\
&= \sum_{k=1}^{n-1}\sum_{\substack{B\subset A \\ |B|=k}}\prod_{\substack{1\leq i\neq j\leq n \\ a_i\in B \\ a_j\in A\backslash B}} p_{i,j} \\
&= \sum_{k=1}^{n-1}\sum_{\substack{B\subset A \\ |B|=k}}\left(\prod_{\substack{1\leq i<j\leq n \\ a_i\in B \\ a_j\in A\backslash B}} p_{i,j}\prod_{\substack{1\leq i<j\leq n \\ a_i\in A\backslash B \\ a_j\in B}} \left(1-p_{i,j}\right)\right), \addtocounter{equation}{1}\tag{\theequation} \label{eqn1}
\end{align*}
where we use the union bound for the inequality. 
We will derive an upper bound for expression (\ref{eqn1}). Note that if we view the terms $p_{i,j}$ with $i<j$ as variables, then the expression in linear in each variable. This implies that the maximum of expression (\ref{eqn1}) over the range $p_{i,j}\in\left[c/n,1-c/n\right]$ is attained when each $p_{i,j}$ is either $c/n$ or $1-c/n$ (but not necessarily when all $p_{i,j}$ are identical). We henceforth assume that for each $i<j$, either $p_{i,j}=c/n$ or $p_{i,j}=1-c/n$. We will show that expression (\ref{eqn1}) is maximized when $p_{i,j}=1-c/n$ for all $i<j$ (or alternatively, when $p_{i,j}=c/n$ for all $i<j$). In fact, we will show the stronger statement that for each particular value of $k$ in the outermost summation, the expression inside the outermost summation is also maximized when $p_{i,j}=1-c/n$ for all $i<j$.

Fix $k\in\{1,2,\dots,n-1\}$. Define a tournament $U$ on $n$ alternatives $b_1,b_2,\dots,b_n$ as follows: For any $i,j\in\{1,2,\dots,n\}$ with $i<j$, there is an edge from $b_i$ to $b_j$ if $p_{i,j}=1-c/n$ and an edge from $b_j$ to $b_i$ if $p_{i,j}=c/n$. We have
\begin{align*}
&\sum_{\substack{B\subset A \\ |B|=k}}\left(\prod_{\substack{1\leq i<j\leq n \\ a_i\in B \\ a_j\in A\backslash B}} p_{i,j}\prod_{\substack{1\leq i<j\leq n \\ a_i\in A\backslash B \\ a_j\in B}} \left(1-p_{i,j}\right)\right) \\
&= \sum_{\substack{B\subset A \\ |B|=k}}\prod_{\substack{1\leq i\neq j\leq n \\ a_i\in B \\ a_j\in A\backslash B}} p_{i,j} \\
&= \left(1-\frac{c}{n}\right)^{k(n-k)} \sum_{\substack{B\subset A \\ |B|=k}} \left(\frac{c}{n-c}\right)^{|\{(a_i,a_j)\in B\times A\backslash B \text{ such that } (b_i,b_j)\text{ is not an edge in } U\}|} \\
&= \left(1-\frac{c}{n}\right)^{k(n-k)} \sum_{\substack{B\subset A \\ |B|=k}} \left(\frac{c}{n-c}\right)^{(n-1)k-\binom{k}{2}-\sum_{i:a_i\in B}\deg_U(b_i)} \\
&= \left(1-\frac{c}{n}\right)^{k(n-k)}\left(\frac{c}{n-c}\right)^{(n-1)k-\binom{k}{2}}\sum_{\substack{B\subset A \\ |B|=k}} \left(\frac{n-c}{c}\right)^{\sum_{i:a_i\in B}\deg_U(b_i)}. \addtocounter{equation}{1}\tag{\theequation} \label{eqn2}
\end{align*}

Let $W$ be a transitive tournament on $n$ alternatives $d_1,d_2,\dots,d_n$, where $d_i$ dominates $d_j$ for all $i<j$. In particular, $\deg_W(d_i)=n-i$ for all $i=1,2,\dots,n$. To show that expression (\ref{eqn1}) is maximized when $p_{i,j}=1-c/n$ for all $i<j$, it suffices to show that expression (\ref{eqn2}) is maximized when $U=W$. The terms outside the summation do not depend on the tournament $U$ that we choose, so for the purpose of maximizing expression (\ref{eqn2}) we may ignore them. 

From Lemma~\ref{lem:tournament-majorize}, we know that
\[(\deg_W(d_1),\deg_W(d_2),\dots,\deg_W(d_n))\succ(\deg_U(b_1),\deg_U(b_2),\dots,\deg_U(b_n)).\]
Lemma~\ref{lem:majorization-sum} then implies that
\[(\deg_W(d_1),\deg_W(d_2),\dots,\deg_W(d_n))^k\succ(\deg_U(b_1),\deg_U(b_2),\dots,\deg_U(b_n))^k.\]
Using Lemma~\ref{lem:karamata} with the convex function $f(x)=\left(\frac{n-c}{c}\right)^x$, we find that 
\[\sum_{\substack{B\subset A \\ |B|=k}} \left(\frac{n-c}{c}\right)^{\sum_{i:a_i\in B}\deg_U(b_i)} \leq \sum_{\substack{B\subset A \\ |B|=k}} \left(\frac{n-c}{c}\right)^{\sum_{i:a_i\in B}\deg_W(d_i)}.\]
It follows that expression (\ref{eqn2}) is maximized when $U=W$, as claimed.

We return to expression (\ref{eqn1}), which we now know is maximized when $p_{i,j}=1-c/n$ for all $i<j$. Substituting $p_{i,j}=1-c/n$ for all $i<j$, expression (\ref{eqn1}) becomes
\begin{align*}
&\sum_{k=1}^{n-1}\left(\left(1-\frac{c}{n}\right)^{k(n-k)} \sum_{\substack{B\subset A \\ |B|=k}} \left(\frac{c}{n-c}\right)^{(n-1)k-\binom{k}{2}-\sum_{i:a_i\in B}\deg_W(d_i)}\right) \\
&\leq 2 \sum_{k=1}^{\lfloor n/2\rfloor} \left(e^{-\frac{ck(n-k)}{n}} \sum_{\substack{B\subset A \\ |B|=k}} \left(\frac{2c}{n}\right)^{(n-1)k-\binom{k}{2}-\sum_{i:a_i\in B}(n-i)}\right) \\
&\leq 2 \sum_{k=1}^{\lfloor n/2\rfloor} \left(e^{-\frac{ck}{2}} \sum_{\substack{B\subset A \\ |B|=k}} \left(\frac{2c}{n}\right)^{\sum_{i:a_i\in B}i-\binom{k+1}{2}}\right), 
\end{align*}
where we use Lemma~\ref{lem:oneplusx}, the assumption $n\geq 4c$, and the symmetry between the terms with $k=i$ and $k=n-i$ for the first inequality. Observe that $\sum_{i:a_i\in B}i-\binom{k+1}{2}$ is always nonnegative, and is zero exactly when $B=\{1,2,\dots,k\}$. Moreover, for any $j=\{1,2,\dots,k\}$, the number of subsets $B\subset A$ with $|B|=k$ such that $\sum_{i:a_i\in B}i-\binom{k+1}{2}\leq j$ is at most $n^j$. Indeed, if a subset $B$ satisfies this inequality, the $n-j$ smallest elements of $B$ must be $1,2,\dots,n-j$, which leaves at most $n^j$ choices for the remaining elements. Note also that $|\{B\subset A\mid |B|=k\}|=\binom{n}{k}\leq n^k$. We have
\begin{align*}
& 2 \sum_{k=1}^{\lfloor n/2\rfloor} \left(e^{-\frac{ck}{2}} \sum_{\substack{B\subset A \\ |B|=k}} \left(\frac{2c}{n}\right)^{\sum_{i:a_i\in B}i-\binom{k+1}{2}}\right) \\
&\leq 2 \sum_{k=1}^{\lfloor n/2\rfloor} \left(e^{-\frac{ck}{2}} \sum_{\substack{B\subset A \\ |B|=k}} \left(\frac{2c}{n}\right)^{\min\left(k,\sum_{i:a_i\in B}i-\binom{k+1}{2}\right)}\right) \\
&\leq 2 \sum_{k=1}^{\lfloor n/2\rfloor} \left(e^{-\frac{ck}{2}} \sum_{j=0}^k \left( n^j\cdot\left(\frac{2c}{n}\right)^j \right) \right) \\
&= 2 \sum_{k=1}^{\lfloor n/2\rfloor} \left(e^{-\frac{ck}{2}} \sum_{j=0}^k (2c)^j\right) \\
&\leq 2 \sum_{k=1}^{\lfloor n/2\rfloor} e^{-\frac{ck}{2}} (4c)^k \\
&= 2 \sum_{k=1}^{\lfloor n/2\rfloor} \left(4ce^{-\frac{c}{2}}\right)^k \\
&\leq 2 \sum_{k=1}^{\infty} \left(4ce^{-\frac{c}{2}}\right)^k \\
&= \frac{8ce^{-\frac{c}{2}}}{1-4ce^{-\frac{c}{2}}} \\
&\leq 16ce^{-\frac{c}{2}},
\end{align*}
where we use the assumption $c\geq 10$ for the last inequality.

In conclusion, when $n\geq N$, the probability that $\tc(T)\neq A$ is at most $16ce^{-\frac{c}{2}}$, completing our proof.
\end{proof}

Since $\tc(T)\subseteq \cond(T)$ and $\tc(T)\subseteq \cnl(T)$, we immediately obtain the following corollary.

\begin{corollary}
\label{cor:COND-CNLall}
Let $f:\mathbb{Z}^+\rightarrow\mathbb{R}_{\geq 0}$ be a function such that $f(n)\leq 1/2$ for all $n$ and $f(n)\in\omega\left(1/n\right)$. Assume that a tournament $T$ is generated according to Model 1, and that
\[p_{i,j}\in\left[f(n),1-f(n)\right]\] 
for all $i\neq j$. Then with high probability, $\cond(T)=\cnl(T)=A$.
\end{corollary}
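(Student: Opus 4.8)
The plan is to derive the statement directly from Theorem~\ref{thm:TCall} together with the inclusion relations recorded in the preliminaries, so essentially no new probabilistic work is required. The key observation is that the excerpt already establishes $\tc(T)\subseteq\cond(T)$ and $\tc(T)\subseteq\cnl(T)$ for every tournament $T$. Since a tournament solution always returns a subset of $A$, these inclusions force $\cond(T)=A$ and $\cnl(T)=A$ whenever $\tc(T)=A$. First I would state this elementary monotonicity: the event $\{\tc(T)=A\}$ is contained in the event $\{\cond(T)=A\}\cap\{\cnl(T)=A\}$.

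Next I would invoke Theorem~\ref{thm:TCall} under the hypotheses of the corollary, which are identical to those of the theorem. That result guarantees $\Pr[\tc(T)=A]\to 1$ as $n\to\infty$. By the containment of events above, $\Pr[\cond(T)=A\text{ and }\cnl(T)=A]\ge\Pr[\tc(T)=A]$, so the former probability also converges to $1$. Equivalently, one could phrase the bound quantitatively: the proof of Theorem~\ref{thm:TCall} shows $\Pr[\tc(T)\neq A]\le 16ce^{-c/2}$ for $n\ge N$, and by the union bound $\Pr[\cond(T)\neq A\text{ or }\cnl(T)\neq A]\le\Pr[\tc(T)\neq A]\le 16ce^{-c/2}$, which tends to $0$ as $c\to\infty$. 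Either formulation yields the ``with high probability'' conclusion $\cond(T)=\cnl(T)=A$.

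There is no genuine obstacle here; the entire content lies in checking that the inclusions propagate the high-probability event correctly. The only point warranting care is the direction of the inclusions: because $\tc(T)$ is the \emph{smaller} set, having it equal to all of $A$ is the strongest of the three conditions and immediately implies the other two. I would make sure to note explicitly that the hypotheses on $f$ and on the $p_{i,j}$ are inherited verbatim from Theorem~\ref{thm:TCall}, so that the application is legitimate, and then conclude.
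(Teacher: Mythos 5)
Your proof is correct and follows exactly the paper's own route: the inclusions $\tc(T)\subseteq\cond(T)$ and $\tc(T)\subseteq\cnl(T)$ mean $\tc(T)=A$ forces $\cond(T)=\cnl(T)=A$, so the conclusion is immediate from Theorem~\ref{thm:TCall}. The paper states this in one line; your additional quantitative remark is harmless but unnecessary.
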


Next, we show that Theorem~\ref{thm:TCall} and Corollary~\ref{cor:COND-CNLall} are tight in the sense that if $f(n)\in O\left(1/n\right)$, the results no longer hold.

\begin{theorem}
\label{thm:TCnotall}
Let $c\geq 0$ be a constant. Assume that a tournament $T$ is generated according to Model 1, and that 
$$p_{i,j}\leq \frac{c}{n}$$
for all $i>j$. Then for large enough $n$, with at least constant probability both $\tc(T)$ and $\cond(T)$ contain a single alternative. Moreover, for large enough $n$, with at least constant probability $\cnl(T)$ does not contain all alternatives.
\end{theorem}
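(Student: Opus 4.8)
The plan is to reduce both conclusions to purely structural events about Condorcet winners and losers, and then to lower-bound the probabilities of those events by a single elementary computation. First I would record the equivalences: for a tournament on $n\geq 2$ alternatives, $\tc(T)$ and $\cond(T)$ both consist of a single alternative exactly when $T$ has a Condorcet winner. Indeed, if $w$ is a Condorcet winner then $\{w\}$ dominates its complement, so it is the (nonempty, minimal) top cycle, and $\cond$ returns $\{w\}$ by definition; conversely a singleton top cycle must dominate all other alternatives and hence be a Condorcet winner, whereas in the absence of a Condorcet winner $\cond$ returns all of $A$. Symmetrically, $\cnl(T)\neq A$ exactly when $T$ has a Condorcet loser. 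It therefore suffices to show that the existence of a Condorcet winner and the existence of a Condorcet loser each occur with at least constant probability.

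For the Condorcet winner I would take $a_1$ as the candidate. The hypothesis $p_{i,j}\leq c/n$ for $i>j$ gives, for every $j>1$, that $a_1$ dominates $a_j$ with probability $p_{1,j}=1-p_{j,1}\geq 1-c/n$. Since distinct edges are oriented independently under Model 1,
\[\Pr[a_1\text{ is a Condorcet winner}]=\prod_{j=2}^n p_{1,j}\geq\left(1-\frac{c}{n}\right)^{n-1}.\]
Because $\left(1-c/n\right)^{n-1}\to e^{-c}>0$, this probability is at least $e^{-c}/2$ for all sufficiently large $n$, so with at least constant probability $a_1$ is a Condorcet winner and both $\tc(T)$ and $\cond(T)$ are singletons.

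The claim about $\cnl$ follows by the mirror-image argument with $a_n$ as the candidate Condorcet loser: for every $j<n$ the edge from $a_n$ to $a_j$ has probability $p_{n,j}\leq c/n$ (as $n>j$), so $a_j$ dominates $a_n$ with probability at least $1-c/n$, and the same independence-based product estimate yields $\Pr[a_n\text{ is a Condorcet loser}]\geq\left(1-c/n\right)^{n-1}\geq e^{-c}/2$ for large $n$. I do not expect a genuine analytic obstacle here: the entire content lies in the structural equivalences of the first paragraph together with the limit $\left(1-c/n\right)^{n-1}\to e^{-c}$. The only step deserving care is verifying those equivalences—especially that a singleton top cycle coincides with a Condorcet winner and that $\cond$ is a singleton only in that case—since everything afterward is a routine product bound relying solely on the independence of edge orientations.
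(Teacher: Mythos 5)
Your proof is correct and takes essentially the same approach as the paper's: choose $a_1$ as a candidate Condorcet winner (and, symmetrically, $a_n$ as a candidate Condorcet loser), lower-bound the relevant probability by $\left(1-\frac{c}{n}\right)^{n-1}\rightarrow e^{-c}$, and conclude via the structural facts about $\tc$, $\cond$, and $\cnl$. The only difference is that you spell out the singleton/Condorcet-winner equivalences and the independence-based product bound, which the paper leaves implicit.
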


\begin{proof}
The probability that $a_1$ dominates all of the remaining alternatives is at least
$$\left(1-\frac{c}{n}\right)^{n-1}\rightarrow e^{-c}$$
as $n\rightarrow\infty$. When this occurs, both \tc and \cond only choose $a_1$.

An analogous argument shows that $a_n$ is dominated by all of the remaining alternatives with at least constant probability for large enough $n$. When this occurs, \cnl chooses all alternatives except $a_n$.
\end{proof}

Theorems~\ref{thm:TCall} and \ref{thm:TCnotall} and Corollary~\ref{cor:COND-CNLall} allow us to obtain the following corollary on the Condorcet random model.

\begin{corollary}
\label{cor:TC}
Let $f:\mathbb{Z}^+\rightarrow \mathbb{R}_{\geq 0}$ be a function such that $f(n)\leq 1/2$ for all $n$. Assume that a tournament $T$ is generated according to Model 1, and that $p_{i,j}=f(n)$ for all $i>j$. 
\begin{itemize}
\item If $f(n)\in\omega\left(1/n\right)$, then with high probability, $\tc(T)=\cond(T)=\cnl(T)=A$. 
\item If $f(n)\in o\left(1/n\right)$, then with high probability, $\tc(T)$ and $\cond(T)$ contain a single alternative, and $\cnl(T)$ does not contain all alternatives. 
\item If $f(n)\leq c/n$ for some constant $c\geq 0$, then for large enough $n$, with at least constant probability $\tc(T)$ and $\cond(T)$ contain a single alternative. Moreover, for large enough $n$, with at least constant probability $\cnl(T)$ does not contain all alternatives.
\end{itemize}
\end{corollary}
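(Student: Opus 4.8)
The plan is to dispatch the three bullet points by reducing the first and third to results already in hand and supplying a short direct computation for the second. For the first bullet I would simply check that the hypotheses of Theorem~\ref{thm:TCall} and Corollary~\ref{cor:COND-CNLall} are met: since $p_{i,j}=f(n)$ for $i>j$, we have $p_{i,j}=1-p_{j,i}=1-f(n)$ for $i<j$, and because $f(n)\leq 1/2$ every $p_{i,j}$ lies in $[f(n),1-f(n)]$. With $f(n)\in\omega(1/n)$, those results immediately yield $\tc(T)=\cond(T)=\cnl(T)=A$ with high probability. For the third bullet, the assumption $p_{i,j}=f(n)\leq c/n$ for all $i>j$ is exactly the hypothesis of Theorem~\ref{thm:TCnotall}, so that theorem delivers the two constant-probability statements verbatim.

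The only bullet requiring a genuinely new argument is the second, where $f(n)\in o(1/n)$ and we must upgrade ``constant probability'' to ``high probability''. The plan here is to bound from below the probability that $a_1$ is a Condorcet winner. For each $j>1$ the edge is oriented from $a_1$ to $a_j$ with probability $p_{1,j}=1-p_{j,1}=1-f(n)$, and since edges on distinct pairs are independent under Model~1, $a_1$ dominates every other alternative with probability exactly $(1-f(n))^{n-1}$. Bernoulli's inequality (Lemma~\ref{lem:bernoulli}) gives $(1-f(n))^{n-1}\geq 1-(n-1)f(n)$, and because $f(n)\in o(1/n)$ forces $(n-1)f(n)\to 0$, this lower bound tends to $1$. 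Whenever $a_1$ is a Condorcet winner, both $\tc(T)$ and $\cond(T)$ equal $\{a_1\}$, so these two solutions select a single alternative with high probability.

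For the \cnl assertion I would run the symmetric computation on $a_n$: for each $i<n$ the edge points from $a_i$ to $a_n$ with probability $p_{i,n}=1-f(n)$, so $a_n$ is a Condorcet loser with probability $(1-f(n))^{n-1}\to 1$ by the identical estimate, and then $\cnl(T)$ excludes $a_n$. I would combine the two events by a union bound on their complements, using that the intersection of finitely many high-probability events again occurs with high probability, to conclude that all three assertions hold simultaneously with probability tending to $1$.

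I do not expect a serious obstacle. The first and third bullets are direct instantiations of earlier theorems, and the real content of the second is merely the observation that the $o(1/n)$ hypothesis—strictly stronger than the $O(1/n)$ hypothesis of Theorem~\ref{thm:TCnotall}—makes $(1-f(n))^{n-1}$ converge to $1$ rather than to a positive constant. The one point to handle carefully is the indexing convention $p_{i,j}=f(n)$ for $i>j$, to make sure it is $a_1$ (respectively $a_n$) that becomes a Condorcet winner (respectively loser).
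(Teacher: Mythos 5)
Your proposal is correct and follows essentially the same route the paper intends: the first and third bullets are direct instantiations of Theorem~\ref{thm:TCall}, Corollary~\ref{cor:COND-CNLall}, and Theorem~\ref{thm:TCnotall}, while the second bullet reruns the Condorcet-winner/loser computation from the proof of Theorem~\ref{thm:TCnotall}, observing that under $f(n)\in o(1/n)$ the bound $(1-f(n))^{n-1}\geq 1-(n-1)f(n)$ tends to $1$ rather than to a positive constant. Your attention to the indexing convention (so that $a_1$ is the Condorcet winner and $a_n$ the Condorcet loser) and the union bound to combine the events are exactly the details the paper leaves implicit.
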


\cite{LRG96a} also considered the case where $p_{i,j}=c/n$ for all $i>j$ and showed that the probability that \tc selects all alternatives converges to $(1-e^{-c})^2$ in this special case. Our next theorem establishes an analogous result for \cond and \cnl.

\begin{theorem}
\label{thm:CONDnotall}
Let $c\geq 0$ be a constant. Assume that a tournament $T$ is generated according to Model 1, and that 
$$p_{i,j} = \frac{c}{n}$$
for all $i>j$. Then the probability that $\cond(T)=A$ converges to $1-e^{-c}$ as $n\rightarrow\infty$. The same statement holds for \cnl.
\end{theorem}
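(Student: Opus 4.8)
The plan is to compute the probability that a Condorcet winner exists exactly, exploiting the fact that a tournament has at most one Condorcet winner. Since \cond chooses a single alternative precisely when a Condorcet winner exists and chooses all of $A$ otherwise, we have $\Pr[\cond(T)=A]=1-\Pr[\text{a Condorcet winner exists}]$. The events ``$a_k$ is a Condorcet winner'' for $k=1,\dots,n$ are pairwise disjoint, because the dominance relation is asymmetric and so no two alternatives can each dominate the other. Hence the probability of their union equals the sum of the individual probabilities exactly; no inclusion--exclusion correction is needed, and this is what will let me pin down the precise limit rather than merely a bound.

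First I would compute $\Pr[a_k\text{ is a Condorcet winner}]$. Under the convention $p_{i,j}=1-p_{j,i}$ with $p_{i,j}=c/n$ for all $i>j$, alternative $a_k$ is a Condorcet winner exactly when it dominates each of the $k-1$ alternatives $a_j$ with $j<k$ (each such ``upset'' has probability $p_{k,j}=c/n$) and each of the $n-k$ alternatives $a_j$ with $j>k$ (each occurring with probability $p_{k,j}=1-c/n$). By independence of the edges,
\[\Pr[a_k\text{ is a Condorcet winner}]=\left(\frac{c}{n}\right)^{k-1}\left(1-\frac{c}{n}\right)^{n-k},\]
so that $\Pr[\text{a Condorcet winner exists}]=\sum_{k=1}^n (c/n)^{k-1}(1-c/n)^{n-k}$.

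Next I would isolate the dominant term and bound the remainder. The $k=1$ term is $(1-c/n)^{n-1}\to e^{-c}$. Each remaining term carries a factor $(c/n)^{k-1}$ with $k\geq 2$, so for $n>c$ a geometric-series bound gives
\[\sum_{k=2}^n \left(\frac{c}{n}\right)^{k-1}\left(1-\frac{c}{n}\right)^{n-k}\leq \sum_{k=2}^n \left(\frac{c}{n}\right)^{k-1}\leq \frac{c/n}{1-c/n},\]
which vanishes as $n\to\infty$. Therefore $\Pr[\text{a Condorcet winner exists}]\to e^{-c}$, and $\Pr[\cond(T)=A]\to 1-e^{-c}$, as claimed.

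Finally, the statement for \cnl follows from the symmetric argument applied to Condorcet losers: $\Pr[\cnl(T)=A]=1-\Pr[\text{a Condorcet loser exists}]$, and again there is at most one Condorcet loser. The analogous edge count yields $\Pr[a_k\text{ is a Condorcet loser}]=(c/n)^{n-k}(1-c/n)^{k-1}$, and the reindexing $k\mapsto n+1-k$ turns the corresponding sum into exactly the Condorcet-winner sum, so it too converges to $e^{-c}$. I do not expect a genuine obstacle here: the only points requiring care are the orientation bookkeeping under $p_{i,j}=1-p_{j,i}$ (so that dominating a lower-indexed alternative is the low-probability event) and confirming the disjointness that upgrades the union bound to an exact equality.
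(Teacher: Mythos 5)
Your proposal is correct and follows essentially the same route as the paper's proof: both exploit that the events ``$a_k$ is a Condorcet winner'' are disjoint to write $\Pr[\cond(T)\neq A]$ as the exact sum $\sum_k (c/n)^{k-1}(1-c/n)^{n-k}$, identify the limit $e^{-c}$ via the dominant term, control the rest with a geometric series (the paper factors out $(1-c/n)^{n-1}$ and shows the remaining sum tends to $1$, while you split off the $k=1$ term and show the tail tends to $0$ --- a cosmetic difference), and handle \cnl by the symmetric Condorcet-loser argument.
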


\begin{proof}
We show the result for \cond; a similar argument holds for \cnl. We have
\begin{align*}
\Pr[\cond(T)\neq A] 
&= \sum_{i=1}^n \Pr[a_i \text{ is a Condorcet winner}] \\
&= \sum_{i=1}^n \left(1-\frac{c}{n}\right)^{n-i}\left(\frac{c}{n}\right)^{i-1} \\
&= \left(1-\frac{c}{n}\right)^{n-1}\cdot\sum_{i=0}^{n-1}\left(\frac{c}{n-c}\right)^i.
\end{align*}
The first term converges to $e^{-c}$ as $n\rightarrow\infty$. For the second term, notice that it is always at least 1. Moreover, when $n\geq (k+1)c$ for some positive $k>1$, the term is at most 
$$1+\frac{1}{k}+\frac{1}{k^2}+\dots=\frac{k}{k-1},$$ which approaches 1 for large $n$. Hence the second term converges to 1, and therefore the probability that $\cond(T)\neq A$ converges to $e^{-c}$, yielding the desired result.
\end{proof}

\section{Uncovered Set}
\label{sec:UC}

In this section, we turn our focus to the uncovered set. We show that when each probability $p_{i,j}$ is between $f(n)$ and $1-f(n)$ for some function $f(n)\geq c\sqrt{\log n/n}$ with $c>\sqrt{2}$ a constant, \uc chooses all alternatives with high probability (Theorem~\ref{thm:UCall}). As with \tc, we also show that our result is asymptotically tight---if $f(n)\leq 0.6\sqrt{\log n/n}$, the statement no longer holds (Theorem~\ref{thm:UCnotall}). It follows that similar results hold for $\uc^\infty$, implying that $\Theta(\sqrt{\log n/n})$ is the threshold where the two tournament solutions go from almost always choosing all alternatives to excluding at least one alternative with high probability.

Our first result of the section shows that \uc chooses the whole set of alternatives for a wide range of distributions over tournaments.

\begin{theorem}
\label{thm:UCall}
Let $c>\sqrt{2}$ be a constant. Assume that a tournament $T$ is generated according to Model 1, and that
\[p_{i,j}\in\left[c\sqrt{\frac{\log n}{n}},1-c\sqrt{\frac{\log n}{n}}\right]\] 
for all $i\neq j$. Then with high probability, $\uc(T)=A$.
\end{theorem}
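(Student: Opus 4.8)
The plan is to exploit the characterization of the uncovered set as the set of kings, so that $\uc(T)=A$ exactly when every alternative can reach every other alternative via a domination path of length at most two. Accordingly, $\uc(T)\neq A$ occurs precisely when there is an ordered pair $(a_i,a_j)$ such that $a_i$ cannot reach $a_j$ in at most two steps. I would bound $\Pr[\uc(T)\neq A]$ by a union bound over all at most $n^2$ such ordered pairs.

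The key computation is the probability that a fixed $a_i$ fails to reach a fixed $a_j$ in at most two steps. This event requires that $a_j$ dominates $a_i$ directly (probability $1-p_{i,j}$) and that for every other alternative $a_\ell$ there is no length-two path $a_i\to a_\ell\to a_j$. Since the path through $a_\ell$ uses only the edge between $a_i,a_\ell$ and the edge between $a_\ell,a_j$, the events indexed by distinct $\ell$ involve pairwise disjoint edges and are therefore mutually independent, and independent of the edge between $a_i$ and $a_j$. This yields the exact expression
\[
\Pr[a_i \text{ cannot reach } a_j \text{ in at most two steps}] = (1-p_{i,j})\prod_{\ell\neq i,j}\bigl(1-p_{i,\ell}\,p_{\ell,j}\bigr).
\]

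Unlike the top cycle, no majorization machinery is needed here, because this product factorizes into independent contributions with no global constraint linking them: each factor $1-p_{i,\ell}p_{\ell,j}$ is maximized individually by making $p_{i,\ell}p_{\ell,j}$ as small as possible. Using the lower bound $p_{i,\ell},p_{\ell,j}\geq c\sqrt{\log n/n}$ gives $p_{i,\ell}p_{\ell,j}\geq c^2\log n/n$, so each factor is at most $1-c^2\log n/n$, and I would bound the direct term $1-p_{i,j}$ simply by $1$. Applying Lemma~\ref{lem:oneplusx} then gives
\[
\Pr[a_i \text{ cannot reach } a_j \text{ in at most two steps}] \leq \left(1-\frac{c^2\log n}{n}\right)^{n-2}\leq n^{-c^2(1-2/n)}.
\]

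Finally, I would assemble the union bound over the at most $n^2$ ordered pairs to obtain
\[
\Pr[\uc(T)\neq A]\leq n^2\cdot n^{-c^2(1-2/n)} = n^{\,2-c^2+2c^2/n}.
\]
As $n\to\infty$ the exponent tends to $2-c^2$, which is negative precisely when $c>\sqrt{2}$; hence the bound converges to $0$, establishing the claim. The only real subtlety is recognizing the independence of the length-two-path events across different intermediate vertices, which both yields the clean product formula and explains why the threshold is governed by $f(n)^2$ rather than $f(n)$, and why the critical constant is exactly $\sqrt{2}$.
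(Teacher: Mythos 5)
Your proposal is correct and follows essentially the same route as the paper's proof: a union bound over ordered pairs, bounding the probability that a fixed $a_i$ fails to reach a fixed $a_j$ in at most two steps by $\left(1-\frac{c^2\log n}{n}\right)^{n-2}$ via the independence of the two-step paths through distinct intermediaries, and then applying $1+x\leq e^x$ so that the exponent $2-c^2<0$ forces the bound to vanish. The only cosmetic difference is that you write out the exact product $(1-p_{i,j})\prod_{\ell\neq i,j}(1-p_{i,\ell}p_{\ell,j})$ before estimating, whereas the paper passes directly to the upper bound.
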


\begin{proof}
Choose $N$ such that $\frac{c^2(N-2)}{N}>2$, and let $n\geq N$. Fix a pair of distinct alternatives $a_i,a_j$. We first bound the probability that $a_i$ cannot reach $a_j$ via a domination path of length at most two. For each $l\not\in\{i,j\}$, the probability that there is an edge from $a_i$ to $a_l$ and an edge from $a_l$ to $a_j$ is at least $\left(c\sqrt{\log n/n}\right)^2=c^2\log n/n$. The probability that $a_i$ cannot reach $a_j$ via a domination path of length at most two is therefore bounded above by
\begin{align*}
\left(1-\frac{c^2\log n}{n}\right)^{n-2}
&\leq e^{-\frac{c^2(n-2)\log n}{n}}\\ 
&= n^{-\frac{c^2(n-2)}{n}},
\end{align*}
where we use Lemma~\ref{lem:oneplusx} for the inequality.

Observe that $\uc(T)=A$ exactly when any alternative can reach any other alternative via a domination path of length at most two. Using the union bound over all (ordered) pairs of distinct alternatives $i,j$, we find that the probability that some alternative cannot reach some other alternative via a domination path of length at most two is no more than
\[n(n-1)n^{-\frac{c^2(n-2)}{n}}\leq n^{2-\frac{c^2(n-2)}{n}},\]
which vanishes for large $n$.
\end{proof}

Since the uncovered set is the finest tournament solution satisfying the axioms of Condorcet consistency, neutrality, and expansion \citep{Moul86a}, Theorem~\ref{thm:UCall} implies that any tournament solution that satisfies these three axioms also selects all alternatives with high probability when the tournament is generated according to the assumptions of the theorem.

Next, we show that the statement of Theorem~\ref{thm:UCall} breaks down if $f(n)\leq 0.6\sqrt{\log n/n}$, thus confirming that the assumption of the theorem cannot be relaxed asymptotically.

\begin{theorem}
\label{thm:UCnotall}
Assume that a tournament $T$ is generated according to Model 1, and that $$p_{i,j}\leq 0.6\sqrt{\frac{\log n}{n}}$$ for all $i>j$. Then with high probability, $\uc(T)\neq A$.
\end{theorem}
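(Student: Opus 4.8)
The plan is to exhibit, with high probability, a single alternative that fails to be a king, which suffices since $\uc(T)=A$ holds exactly when every alternative reaches every other via a domination path of length at most two. Writing $q=0.6\sqrt{\log n/n}$, the hypothesis $p_{i,j}\le q$ for all $i>j$ says that $a_1$ is dominated only rarely, so I would aim to produce an alternative that cannot reach $a_1$ in at most two steps. The right object to condition on is the set $S=\{a_l : a_l \text{ dominates } a_1\}$ of dominators of $a_1$: each $a_l$ with $l\ge 2$ lies in $S$ independently with probability $p_{l,1}\le q$, and given $S$, an alternative $a_i\notin S\cup\{a_1\}$ reaches $a_1$ in at most two steps if and only if it dominates some element of $S$. (If $S=\emptyset$ then $a_1$ is a Condorcet winner and $\uc(T)=\{a_1\}\ne A$ outright, so this case only helps.)

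The key step is a conditional-independence observation. Condition on a fixed realization of $S$, and consider only the indices $i>\max S$; every $l\in S$ then satisfies $l<i$, so $a_i$ dominates $a_l$ with probability $p_{i,l}\le q$, whence $a_i$ dominates no element of $S$ with probability at least $(1-q)^{|S|}$. Crucially, for distinct such indices $i$ these events depend on disjoint sets of edges, all of them disjoint from the edges incident to $a_1$ that determine $S$; hence, conditioned on $S$, the events ``$a_i$ dominates no element of $S$'' for the indices $i>\max S$ are mutually independent, each of probability at least $(1-q)^{|S|}$. Whenever one of them occurs, the corresponding $a_i$ cannot reach $a_1$, so it is not a king and $\uc(T)\ne A$.

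It remains to control two quantities on a high-probability event. First, $|S|=\sum_{l\ge 2}\mathbf{1}[a_l \text{ dominates } a_1]$ has expectation at most $(n-1)q=O(\sqrt{n\log n})$, so by the Chernoff bound (Lemma~\ref{lem:chernoff}) we have $|S|\le 0.65\sqrt{n\log n}$ with high probability; an elementary estimate for $\log(1-q)$ then yields the per-index lower bound $(1-q)^{|S|}\ge n^{-0.4}$ for large $n$. Second, a union bound gives $\Pr[\max S>n-n^{0.45}]\le n^{0.45}\,q=0.6\,n^{-0.05}\sqrt{\log n}\to 0$, so with high probability there are at least $n^{0.45}$ indices $i>\max S$. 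Conditioned on these two events, the probability that all such $a_i$ reach $a_1$ is at most $(1-n^{-0.4})^{n^{0.45}}\le e^{-n^{0.05}}$ by Lemma~\ref{lem:oneplusx}, and combining the three estimates gives $\Pr[\uc(T)=A]\to 0$.

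The main obstacle is precisely that the per-pair failure events are far from independent and that a single pair $(a_n,a_1)$ fails to connect with probability only about $n^{-0.36}\to 0$, so no naive union bound or single-vertex argument can succeed. Conditioning on $S$ resolves both issues simultaneously: it decouples the failure events, and restricting to indices above $\max S$ discards exactly the alternatives that are essentially \emph{forced} to reach $a_1$ (an alternative sitting below a higher-indexed dominator of $a_1$ dominates that dominator with probability close to $1$). The quantitative heart is then a comparison of exponents — each of the $\approx n^{0.45}$ independent trials fails with probability $\approx n^{-0.4}$, and these balance only because $0.6^2=0.36<1/2$; this inequality $c^2<1/2$ is exactly what the constant $0.6$ secures.
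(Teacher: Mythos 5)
Your proof is correct, and it is essentially the mirror image of the paper's argument: both rest on the same three ingredients---a Chernoff bound showing that the relevant neighborhood of an extremal alternative has size roughly at most $0.6\sqrt{n\log n}$, conditional independence of the two-step connection events given that neighborhood, and the exponent balance $0.6^2=0.36<1/2$---but with the roles of witness and target transposed. The paper fixes the \emph{weakest} alternative $a_n$ as the single candidate non-king, conditions on its out-neighborhood $A_2$ together with the event that $a_n$ dominates none of the $\lfloor n^{0.49}\rfloor$ strongest alternatives $A_1$, and uses the alternatives in $A_1$ as many independent targets; it thereby pinpoints a specific excluded alternative, proving $a_n\notin\uc(T)$ with high probability. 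You instead fix the \emph{strongest} alternative $a_1$ as the single target, condition on its in-neighborhood $S$, and use the at least $n^{0.45}$ alternatives with index above $\max S$ as many independent candidate non-kings, which yields only the existence of some unspecified non-king---still exactly what the theorem requires. Your bound on $\max S$ plays precisely the role of the paper's event (i): both ensure that every edge in play falls under the hypothesis $p_{i,j}\le 0.6\sqrt{\log n/n}$ for $i>j$, and that any two-step path is forced through the small conditioned set. Two minor technical points. First, the summands defining $|S|$ have heterogeneous means $p_{l,1}\le q$, so invoking Lemma~\ref{lem:chernoff} directly with threshold $0.65\sqrt{n\log n}$ is not quite immediate when $\E[|S|]$ is much smaller than $0.6\sqrt{n\log n}$; you need either stochastic domination by Bernoulli$(q)$ variables or the padding trick the paper uses for its claim (ii) (its auxiliary variables $X_i'$)---a one-line fix. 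Second, your estimate $(1-q)^{|S|}\ge n^{-0.4}$ genuinely requires the sharp bound $\ln(1-q)=-q(1+o(1))$: the cruder Bernoulli-style route used in the paper (Lemma~\ref{lem:bernoulli}), applied with your constants, gives only $(1-0.39)^{\log n}=n^{\ln 0.61}\approx n^{-0.494}$, which is too weak against $n^{0.45}$ candidates. As you state it the step is correct, but it is worth flagging that this estimate, not Bernoulli, is the load-bearing one there.
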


\begin{proof}
Let $A_1=\{a_1,a_2,\dots,a_{\left\lfloor n^{0.49}\right\rfloor}\}$, and let $A_2$ be the set of alternatives that $a_n$ dominates. We first prove the following claim.

\emph{Claim}: With high probability, the following two events occur simultaneously: (i) $a_n$ does not dominate any of the alternatives in $A_1$, and (ii) $|A_2|\leq 0.61\sqrt{n\log n}$. 

\emph{Proof of Claim}: First, using Lemma~\ref{lem:bernoulli}, the probability that $a_n$ dominates at least one of the alternatives in $A_1$ is at most
\begin{align*}
 1-\left(1-0.6\sqrt{\frac{\log n}{n}}\right)^{\left\lfloor n^{0.49}\right\rfloor} 
&\leq 1-\left(1-0.6\left\lfloor n^{0.49}\right\rfloor\sqrt{\frac{\log n}{n}}\right)\\ 
&\leq 0.6\cdot\frac{\sqrt{\log n}}{n^{0.01}},
\end{align*}
which converges to 0 as $n\rightarrow\infty$. 

Next, for each $a_i\in A$ with $i=1,2,\dots,n-1$, let $X_i$ be an indicator random variable that indicates whether $a_n$ dominates $a_i$ or not: $X_i$ takes on the value 1 if $a_n$ dominates $a_i$ and 0 otherwise. We have 
$$\E[X_i]\leq 0.6\sqrt{\frac{\log n}{n}}\leq \frac{0.6\sqrt{n\log n}}{n-1}.$$
Define $X'_i=X_i+\frac{0.6\sqrt{n\log n}}{n-1}-\E[X_i]$, and $X'=\sum_{i=1}^{n-1} X'_i$. We now have 
$$\E[X'_i]=\frac{0.6\sqrt{n\log n}}{n-1} \text{ and } \E[X']=0.6\sqrt{n\log n}.$$
Moreover, observe that $|A_2|=\sum_{i=1}^{n-1} X_i$. By Lemma~\ref{lem:chernoff}, it follows that
\begin{align*}
\Pr\left[|A_2| > 0.61\sqrt{n\log n}\right] 
&\leq \Pr\left[X' > 0.61\sqrt{n\log n}\right] \\
&= \Pr\left[X' > \frac{0.61}{0.6}\cdot\E[X']\right] \\
&\leq \exp\left(-\left(\frac{0.61}{0.6}-1\right)^2\cdot\frac{\E[X']}{3}\right) \\
&= \exp\left(-\frac{1}{3600}\cdot\frac{0.6\sqrt{n\log n}}{3}\right),
\end{align*}
which again vanishes for large $n$. 

Using the union bound over the two events, we have our claim. \hfill $\square$

From now on, we assume that $a_n$ does not dominate any of the alternatives in $A_1$ and that $|A_2|\leq 0.61\sqrt{n\log n}$.  Under this assumption, $a_n$ can reach all of the alternatives in $A_1$ via a domination path of length at most two if and only if each alternative in $A_1$ is dominated by some alternative in $A_2$. Note that the event that this holds for a particular alternative in $A_1$ is independent of the corresponding events for other alternatives in $A_1$. It follows that
\begin{align*}
&\Pr\left[a_n \text{ can reach all } a_i\in A_1 \text{ via a domination path of length at most two}\right] \\
&= \Pr\left[a_n \text{ can reach a fixed } a_i\in A_1 \text{ via a domination path of length two}\right]^{\left\lfloor n^{0.49}\right\rfloor} \\
&= (1-\Pr\left[\text{a fixed } a_i\in A_1 \text{ dominates } a_j \text{ for all } a_j\in A_2\right])^{\left\lfloor n^{0.49}\right\rfloor} \\
&\leq \left(1-\left(1-0.6\sqrt{\frac{\log n}{n}}\right)^{0.61\sqrt{n\log n}}\right)^{\left\lfloor n^{0.49}\right\rfloor} \\
&= \left(1-\left(\left(1-0.6\sqrt{\frac{\log n}{n}}\right)^{0.61\sqrt{\frac{n}{\log n}}}\right)^{\log n}\right)^{\left\lfloor n^{0.49}\right\rfloor} \\
&\leq \left(1-\left(1-0.6\sqrt{\frac{\log n}{n}}\cdot 0.61\sqrt{\frac{n}{\log n}}\right)^{\log n}\right)^{\left\lfloor n^{0.49}\right\rfloor} \\
&= \left(1-0.634^{\log n}\right)^{\left\lfloor n^{0.49}\right\rfloor} \\
&\leq \left(1-n^{-0.46}\right)^{\left\lfloor n^{0.49}\right\rfloor} \\
&\leq e^{-n^{-0.46}\left\lfloor n^{0.49}\right\rfloor},
\end{align*}
where we use Lemma~\ref{lem:bernoulli}, the estimate $0.634>e^{-0.46}$, and Lemma~\ref{lem:oneplusx} for the second, third, and fourth inequalities, respectively. 

Finally, since 
\[\lim_{n\rightarrow\infty}e^{-n^{-0.46}\lfloor n^{0.49}\rfloor}=0,\]
the probability that $a_n\not\in \uc(T)$ converges to 1 as $n$ goes to infinity. This implies that with high probability, $\uc(T)$ is not the whole set of alternatives, as desired.
\end{proof}

Since $\uc(T)=A$ exactly when $\uc^\infty(T)=A$, we immediately have the following corollary.

\begin{corollary}
\label{cor:UCinf}
Assume that a tournament $T$ is generated according to Model 1. 
\begin{itemize}
\item Let $c>\sqrt{2}$ be a constant. If
$p_{i,j}\in\left[c\sqrt{\frac{\log n}{n}},1-c\sqrt{\frac{\log n}{n}}\right]$ 
for all $i\neq j$, then with high probability, $\uc^\infty(T)=A$.
\item If $p_{i,j}\leq 0.6\sqrt{\frac{\log n}{n}}$ for all $i>j$, then with high probability, $\uc^\infty(T)\neq A$.
\end{itemize}
\end{corollary}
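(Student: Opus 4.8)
The plan is to deduce both parts directly from Theorems~\ref{thm:UCall} and~\ref{thm:UCnotall} via the set-level identity that the event $\uc(T)=A$ and the event $\uc^\infty(T)=A$ coincide for every tournament $T$. Once this identity is in hand, the two ``with high probability'' statements transfer verbatim, because the two events are literally the same event in the underlying probability space, and no further probabilistic estimate is required.

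First I would establish the identity $\uc(T)=A \iff \uc^\infty(T)=A$. The backward direction is immediate from the inclusion $\uc^\infty(T)\subseteq\uc(T)\subseteq A$ recorded in the preliminaries: if $\uc^\infty(T)=A$, then $\uc(T)$ is squeezed between $A$ and $A$, so $\uc(T)=A$. For the forward direction, recall that $\uc^\infty$ is obtained by iterating $\uc$ until the set stops shrinking; if $\uc(T)=A$, then the very first application of $\uc$ produces no reduction, the iteration stabilizes at $A$ immediately, and hence $\uc^\infty(T)=A$.

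With the identity established, the corollary follows at once. For the first bullet, under the hypothesis $p_{i,j}\in[c\sqrt{\log n/n},\,1-c\sqrt{\log n/n}]$ with $c>\sqrt{2}$, Theorem~\ref{thm:UCall} gives $\uc(T)=A$ with high probability; since this event equals $\uc^\infty(T)=A$, the latter also holds with high probability. For the second bullet, under the hypothesis $p_{i,j}\le 0.6\sqrt{\log n/n}$ for all $i>j$, Theorem~\ref{thm:UCnotall} gives $\uc(T)\neq A$ with high probability, and taking complements in the identity yields $\uc(T)\neq A \iff \uc^\infty(T)\neq A$, so $\uc^\infty(T)\neq A$ with high probability as well. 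There is essentially no obstacle here: the only point requiring genuine care is the justification of the set-level identity, and that reduces to the monotone inclusion $\uc^\infty(T)\subseteq\uc(T)$ together with the observation that a fixed point of the iteration is reached as soon as $\uc$ returns the full set of alternatives.
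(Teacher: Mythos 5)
Your proposal is correct and matches the paper's own argument: the paper derives Corollary~\ref{cor:UCinf} from Theorems~\ref{thm:UCall} and~\ref{thm:UCnotall} precisely via the observation that $\uc(T)=A$ exactly when $\uc^\infty(T)=A$. Your only addition is to spell out the justification of this identity (the inclusion $\uc^\infty(T)\subseteq\uc(T)$ for one direction, and the fixed-point property of the iteration for the other), which the paper leaves implicit.
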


Theorems~\ref{thm:UCall} and \ref{thm:UCnotall} and Corollary~\ref{cor:UCinf} allow us to obtain the following corollary on the Condorcet random model.

\begin{corollary}
\label{cor:UC}
Let $f:\mathbb{Z}^+\rightarrow \mathbb{R}_{\geq 0}$ be a function such that $f(n)\leq 1/2$ for all $n$. Assume that a tournament $T$ is generated according to Model 1, and that $p_{i,j}=f(n)$ for all $i>j$. 
\begin{itemize}
\item If $f(n)\in\omega\left(\sqrt{\log n/n}\right)$ or $f(n)\geq c\sqrt{\log n/n}$ for some constant $c>\sqrt{2}$, then with high probability, $\uc(T)=\uc^{\infty}(T)=A$. 
\item If $f(n)\in o\left(\sqrt{\log n/n}\right)$ or $f(n)\leq 0.6\sqrt{\log n/n}$, then with high probability, $\uc(T)\neq A$ and $\uc^{\infty}(T)\neq A$. 
\end{itemize}
\end{corollary}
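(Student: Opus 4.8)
The plan is to treat Corollary~\ref{cor:UC} purely as a specialization of the general Model~1 results to the Condorcet random model, so that the only work is to check that the hypotheses of Theorem~\ref{thm:UCall}, Theorem~\ref{thm:UCnotall}, and Corollary~\ref{cor:UCinf} are met in each of the two regimes. In the present setting we have $p_{i,j}=f(n)$ whenever $i>j$ and, by the relation $p_{j,i}=1-p_{i,j}$, we have $p_{i,j}=1-f(n)$ whenever $i<j$. Thus the probabilities take only the two values $f(n)$ and $1-f(n)$, and verifying the theorems amounts to sandwiching these two values inside the required intervals.

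For the first bullet, I would first observe that both stated conditions force $f(n)\geq c'\sqrt{\log n/n}$ for all sufficiently large $n$, for some fixed constant $c'>\sqrt{2}$: this is immediate in the case $f(n)\geq c\sqrt{\log n/n}$ by taking $c'=c$, and in the case $f(n)\in\omega(\sqrt{\log n/n})$ it follows by taking, say, $c'=2$, since $f(n)/\sqrt{\log n/n}\to\infty$. It then remains to check that every $p_{i,j}$ lies in $\bigl[c'\sqrt{\log n/n},\,1-c'\sqrt{\log n/n}\bigr]$ for large $n$. The backward edges ($i>j$) have value $f(n)$, which is at least $c'\sqrt{\log n/n}$ by the above and at most $1/2\leq 1-c'\sqrt{\log n/n}$ once $n$ is large; the forward edges ($i<j$) have value $1-f(n)$, which is at least $1/2\geq c'\sqrt{\log n/n}$ for large $n$ and at most $1-c'\sqrt{\log n/n}$, again because $f(n)\geq c'\sqrt{\log n/n}$. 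Having verified the hypothesis, I would invoke Corollary~\ref{cor:UCinf} (whose first part is Theorem~\ref{thm:UCall} together with the equivalence $\uc(T)=A$ if and only if $\uc^\infty(T)=A$) to conclude that $\uc(T)=\uc^\infty(T)=A$ with high probability.

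For the second bullet, the plan is analogous but simpler, since Theorem~\ref{thm:UCnotall} only restricts the backward edges. Both stated conditions force $f(n)\leq 0.6\sqrt{\log n/n}$ for all sufficiently large $n$: this is the hypothesis itself in one case, and in the case $f(n)\in o(\sqrt{\log n/n})$ it holds because $f(n)/\sqrt{\log n/n}\to 0$. Since $p_{i,j}=f(n)$ for every $i>j$, the hypothesis $p_{i,j}\leq 0.6\sqrt{\log n/n}$ of Theorem~\ref{thm:UCnotall} is satisfied for large $n$, and the second part of Corollary~\ref{cor:UCinf} then yields $\uc(T)\neq A$ and $\uc^\infty(T)\neq A$ with high probability.

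The main thing to be careful about is the translation of the asymptotic conditions $\omega(\cdot)$ and $o(\cdot)$ into eventual pointwise inequalities, and the observation that the upper boundary $1-c'\sqrt{\log n/n}$ never causes trouble: the only place the assumption $f(n)\leq 1/2$ is genuinely used is in bounding the forward-edge probabilities $1-f(n)$ away from $1$. There is no real analytic obstacle here, as the content of the corollary is entirely contained in the two theorems, and the proof is a matter of bookkeeping over the two regimes.
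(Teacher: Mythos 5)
Your proposal is correct and matches the paper's approach exactly: the paper states Corollary~\ref{cor:UC} as an immediate consequence of Theorems~\ref{thm:UCall} and \ref{thm:UCnotall} and Corollary~\ref{cor:UCinf}, with precisely the bookkeeping you carry out (translating the $\omega(\cdot)$ and $o(\cdot)$ conditions into eventual pointwise bounds and checking that both $f(n)$ and $1-f(n)$ lie in the required interval). Nothing is missing.
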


\section{Majority Tournaments}
\label{sec:extensions}

Thus far, we have established probabilistic results for a general model in which the distribution over tournaments is defined by the probabilities that an alternative dominates another alternative in the tournament (Model 1). As we mentioned in Section~\ref{sec:prelims}, a common interpretation of tournaments is as the majority relation of an odd number of voters who are endowed with linear preferences over a set of alternatives. In this section, we investigate a more specific model in which the distribution over tournaments is determined by the probability that a voter prefers an alternative to another alternative (Model 2). It turns out that the generality of our results for Model 1 will allow us to derive similar results for Model 2 as consequences. 

We first consider the coarser tournament solutions \tc, \cond, and \cnl.

\begin{theorem}
\label{thm:TCallvoting}
Let $f:\mathbb{Z}^+\rightarrow\mathbb{R}_{\geq 0}$ be a function such that $f(n)\leq 1/2$ for all $n$, and $f(n)\in\omega\left(1/n^{2/(k+1)}\right)$. Assume that a tournament $T$ is generated according to Model 2, and that
\[q_{v,i,j} \in [f(n),1-f(n)]\]
for all voters $v$ and all $i\neq j$. Then with high probability, $\tc(T)=\cond(T)=\cnl(T)=A$.
\end{theorem}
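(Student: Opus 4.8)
The plan is to reduce Model 2 to Model 1 and then invoke the results already established for the top cycle. The crucial observation is that the majority tournament produced by Model 2 is itself a Model 1 tournament: the orientation of the edge between $a_i$ and $a_j$ depends only on the votes $q_{v,i,j}$ over that single pair, and since each voter's preferences on distinct pairs are drawn independently, the orientations of different edges are independent. Writing $p_{i,j}$ for the probability that $a_i$ dominates $a_j$ in the majority relation, it therefore suffices to show that $p_{i,j}\in[g(n),1-g(n)]$ for some $g(n)\in\omega(1/n)$ with $g(n)\leq 1/2$, after which Theorem~\ref{thm:TCall} and Corollary~\ref{cor:COND-CNLall} apply directly to yield $\tc(T)=\cond(T)=\cnl(T)=A$ with high probability.

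First I would establish a lower bound on $p_{i,j}$. Since $k$ is odd, any set of $(k+1)/2$ voters already forms a strict majority. Fixing such a set $S$, the event that every voter in $S$ prefers $a_i$ to $a_j$ forces $a_i$ to dominate $a_j$ in the majority tournament, and by independence across voters its probability is at least $\prod_{v\in S}q_{v,i,j}\geq f(n)^{(k+1)/2}$, using $q_{v,i,j}\geq f(n)$. Hence $p_{i,j}\geq f(n)^{(k+1)/2}$. Because $q_{v,j,i}=1-q_{v,i,j}\geq f(n)$ as well, the same argument applied to the reversed pair gives $p_{j,i}\geq f(n)^{(k+1)/2}$, so that $p_{i,j}=1-p_{j,i}\leq 1-f(n)^{(k+1)/2}$.

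Setting $g(n)=f(n)^{(k+1)/2}$, it remains to verify the two asymptotic requirements. Since $f(n)\leq 1/2$ and $k\geq 1$, we have $g(n)\leq(1/2)^{(k+1)/2}\leq 1/2$. For the growth condition, the hypothesis $f(n)\in\omega(1/n^{2/(k+1)})$ means $f(n)\cdot n^{2/(k+1)}\to\infty$; raising to the fixed positive power $(k+1)/2$ preserves divergence, so $f(n)^{(k+1)/2}\cdot n\to\infty$, i.e.\ $g(n)\in\omega(1/n)$. Thus $p_{i,j}\in[g(n),1-g(n)]$ with $g\in\omega(1/n)$ and $g(n)\leq 1/2$, and applying Theorem~\ref{thm:TCall} (with $g$ in place of $f$) followed by Corollary~\ref{cor:COND-CNLall} completes the argument.

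The only real subtlety, rather than a genuine obstacle, is choosing the right event for the lower bound on the majority probability: the unanimity of a minimal majority $S$ is what makes the estimate clean and produces precisely the exponent $(k+1)/2$ to which the hypothesis $f(n)\in\omega(1/n^{2/(k+1)})$ is calibrated. Everything else is a direct reduction to the Model~1 results, and one should simply take care that the independence of edges across pairs---needed to legitimately apply Theorem~\ref{thm:TCall}---is inherited from the independence of voters' preferences across pairs built into Model~2.
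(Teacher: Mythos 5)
Your proof is correct, and its skeleton matches the paper's: reduce to \tc via the inclusions $\tc(T)\subseteq\cond(T)$ and $\tc(T)\subseteq\cnl(T)$, observe that the majority relation of Model 2 is itself a Model 1 tournament (independence across edges being inherited from the pairwise independence of the votes), lower-bound the induced edge probabilities $p_{i,j}$, and invoke Theorem~\ref{thm:TCall}. Where you diverge is in how that bound on $p_{i,j}$ is obtained. The paper first asserts that $p_{i,j}$ is minimized when every $q_{v,i,j}$ sits at the lower endpoint of its range (a monotonicity observation it does not prove), and then bounds the majority probability at that extremal point by the probability that \emph{exactly} $(k+1)/2$ of the $k$ voters prefer $a_i$ to $a_j$, computing $\binom{k}{(k+1)/2}q^{(k+1)/2}(1-q)^{(k-1)/2}$ and estimating it down to $c/n$ via $\binom{n}{k}\geq(n/k)^k$ and the threshold $n\geq(2c)^{(k+1)/2}$. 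Your event---unanimity of one fixed minimal majority $S$ of $(k+1)/2$ voters---gives $p_{i,j}\geq\prod_{v\in S}q_{v,i,j}\geq f(n)^{(k+1)/2}$ pointwise for heterogeneous $q_{v,i,j}$, with no extremality claim, no binomial coefficients, and no thresholds on $n$; the upper bound $p_{i,j}\leq 1-f(n)^{(k+1)/2}$ then follows by symmetry exactly as you say. This also makes the conclusion cleaner: you exhibit an explicit function $g(n)=f(n)^{(k+1)/2}\in\omega(1/n)$ to feed into Theorem~\ref{thm:TCall}, whereas the paper argues for each constant $c$ separately that $p_{i,j}\geq c/n$ for large $n$ and then asserts the existence of a suitable $\omega(1/n)$ function. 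The price of your simpler event is a slightly weaker constant (you discard the binomial factor), but this is asymptotically irrelevant, since the hypothesis $f(n)\in\omega\left(1/n^{2/(k+1)}\right)$ is calibrated precisely so that $f(n)^{(k+1)/2}\in\omega(1/n)$.
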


\begin{proof}
Since $\tc(T)\subseteq \cond(T)$ and $\tc(T)\subseteq \cnl(T)$, it suffices to prove the statement for \tc. Let $c>0$ be a constant. Since $f(n)\in\omega\left(1/n^{2/(k+1)}\right)$, there exists $N'$ such that $f(n)\geq c/n^{2/(k+1)}$ for all $n\geq N'$. Let $N=\max(N',(2c)^{(k+1)/2})$ and $n\geq N$, and fix a pair of distinct alternatives $a_i,a_j$. Let $p_{i,j}$ denote the probability that $a_i$ dominates $a_j$ in $T$. Observe that $p_{i,j}$ is minimized when $q_{v,i,j}=c/n^{2/(k+1)}$ for all voters $v$. When $q_{v,i,j}$ takes on this value for all $v$, the probability that $a_i$ dominates $a_j$ is at least the probability that exactly $(k+1)/2$ voters prefer $a_i$ to $a_j$. The latter probability is
\begin{align*}
&\binom{k}{\frac{k+1}{2}}\left(\frac{c}{n^{\frac{2}{k+1}}}\right)^{\frac{k+1}{2}}\left(1-\frac{c}{n^{\frac{2}{k+1}}}\right)^{\frac{k-1}{2}} \\
&\geq \binom{k}{\frac{k-1}{2}}\cdot\frac{c^{\frac{k+1}{2}}}{n}\cdot\left(\frac{1}{2}\right)^{\frac{k-1}{2}} \\
&\geq \left(\frac{k}{\frac{k-1}{2}}\right)^{\frac{k-1}{2}}\cdot\frac{c}{n}\cdot\left(\frac{1}{2}\right)^{\frac{k-1}{2}} \\
&\geq 2^{\frac{k-1}{2}}\cdot\frac{c}{n}\cdot\left(\frac{1}{2}\right)^{\frac{k-1}{2}} \\
&\geq \frac{c}{n},
\end{align*}
where we use the assumption $n\geq (2c)^{(k+1)/2}$ for the first inequality and the approximation $\binom{n}{k}\geq\left(n/k\right)^k$ for the second inequality.
Hence $p_{i,j}\geq c/n$ for all $n\geq N$. This implies that there exists a function $f(n)\in\omega\left(1/n\right)$ such that $p_{i,j}\in [f(n),1-f(n)]$. Using Theorem~\ref{thm:TCall}, we have that $\tc(T)=A$ with high probability, as desired. 
\end{proof}

We now consider the finer tournament solutions \uc and $\uc^\infty$.

\begin{theorem}
\label{thm:UCallvoting}
Let $c>\sqrt{2}$ be a constant. Assume that a tournament $T$ is generated according to Model 2, and that
\[q_{v,i,j} \in \left[c\left(\frac{\log n}{n}\right)^{\frac{1}{k+1}},1-c\left(\frac{\log n}{n}\right)^{\frac{1}{k+1}}\right]\]
for all voters $v$ and all $i\neq j$. Then with high probability, $\uc(T)=\uc^\infty(T)=A$.
\end{theorem}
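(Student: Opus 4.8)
The plan is to reduce Model~2 to Model~1 exactly as in the proof of Theorem~\ref{thm:TCallvoting}, and then invoke Theorem~\ref{thm:UCall}. For each ordered pair of distinct alternatives $a_i,a_j$, let $p_{i,j}$ denote the induced marginal probability that $a_i$ dominates $a_j$ in the majority tournament $T$. Since the votes on distinct pairs of alternatives are drawn independently (by the specification of Model~2), the orientations of different edges of $T$ are independent, so $T$ is distributed according to Model~1 with these induced probabilities. It therefore suffices to show that there is a constant $c'>\sqrt{2}$ such that, for all large $n$ and all $i\neq j$,
\[
p_{i,j}\in\left[c'\sqrt{\frac{\log n}{n}},\,1-c'\sqrt{\frac{\log n}{n}}\right];
\]
Theorem~\ref{thm:UCall} then yields $\uc(T)=A$ with high probability, and since $\uc(T)=A$ exactly when $\uc^\infty(T)=A$, we also obtain $\uc^\infty(T)=A$.

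The main step is to bound the induced $p_{i,j}$ away from $0$ (and, symmetrically, from $1$). As in the proof of Theorem~\ref{thm:TCallvoting}, the probability that a majority prefers $a_i$ is monotone in each $q_{v,i,j}$, so $p_{i,j}$ is minimized when every voter takes the smallest allowed value $q:=c\left(\log n/n\right)^{1/(k+1)}$. Under this worst case I would lower-bound $p_{i,j}$ by the probability that exactly $(k+1)/2$ of the $k$ voters prefer $a_i$, namely
\[
\binom{k}{\frac{k+1}{2}}\,q^{\frac{k+1}{2}}(1-q)^{\frac{k-1}{2}}.
\]
The crucial observation is that raising $q$ to the power $(k+1)/2$ turns the exponent $1/(k+1)$ into exactly $1/2$:
\[
q^{\frac{k+1}{2}}=c^{\frac{k+1}{2}}\left(\frac{\log n}{n}\right)^{1/2}=c^{\frac{k+1}{2}}\sqrt{\frac{\log n}{n}},
\]
so the whole expression is a constant multiple of $\sqrt{\log n/n}$, matching the threshold of Theorem~\ref{thm:UCall}.

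It then remains to verify that the resulting constant strictly exceeds $\sqrt{2}$. Since $q\to 0$ as $n\to\infty$, the factor $(1-q)^{(k-1)/2}$ tends to $1$ and hence exceeds any fixed constant below $1$ for large $n$, while $\binom{k}{(k+1)/2}\geq 1$. Because $c>1$ and $(k+1)/2\geq 1$, we have $c^{(k+1)/2}\geq c>\sqrt{2}$, so one may fix any $c'$ with $\sqrt{2}<c'<c$ and conclude $p_{i,j}\geq c'\sqrt{\log n/n}$ for $n$ large enough. The matching upper bound $p_{i,j}\leq 1-c'\sqrt{\log n/n}$ follows identically from the upper constraint $q_{v,i,j}\leq 1-c\left(\log n/n\right)^{1/(k+1)}$ by interchanging the roles of $a_i$ and $a_j$. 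Applying Theorem~\ref{thm:UCall} with the constant $c'$ finishes the argument.

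The only delicate point—and the step I expect to be the main obstacle—is this constant chasing, since Theorem~\ref{thm:UCall} requires a constant \emph{strictly} above $\sqrt{2}$ and fails at the boundary. It is resolved cleanly by the identity $q^{(k+1)/2}=c^{(k+1)/2}\sqrt{\log n/n}$ together with $c^{(k+1)/2}\geq c>\sqrt{2}$; the binomial coefficient and the $(1-q)^{(k-1)/2}$ factor only improve the bound. No probabilistic machinery beyond the Model~2-to-Model~1 reduction and Theorem~\ref{thm:UCall} is needed.
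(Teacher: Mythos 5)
Your proposal is correct and follows essentially the same route as the paper's proof: reduce to Model 1 with the induced edge probabilities $p_{i,j}$, note that $p_{i,j}$ is minimized when every voter uses the smallest allowed probability $q=c\left(\log n/n\right)^{1/(k+1)}$, lower-bound $p_{i,j}$ by the probability that exactly $(k+1)/2$ voters prefer $a_i$, and exploit the identity $q^{(k+1)/2}=c^{(k+1)/2}\sqrt{\log n/n}$ to invoke Theorem~\ref{thm:UCall}. The only cosmetic difference is in the constant bookkeeping: the paper keeps the original constant $c$ by absorbing the $(1-q)^{(k-1)/2}\geq (1/2)^{(k-1)/2}$ loss into the binomial coefficient via $\binom{k}{(k-1)/2}\geq \left(\frac{2k}{k-1}\right)^{(k-1)/2}\geq 2^{(k-1)/2}$ under the assumption $n\geq (2c)^{2k+2}$, whereas you pass to a slightly smaller constant $c'\in(\sqrt{2},c)$ using $(1-q)^{(k-1)/2}\rightarrow 1$, which is equally valid.
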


\begin{proof}
Since $\uc(T)=A$ implies that $\uc^\infty(T)=A$, it suffices to prove the statement for \uc. Let $n\geq (2c)^{2k+2}$, and fix a pair of distinct alternatives $a_i,a_j$. Let $p_{i,j}$ denote the probability that $a_i$ dominates $a_j$ in $T$. Observe that $p_{i,j}$ is minimized when $q_{v,i,j}=c\left(\log n/n\right)^{1/(k+1)}$ for all voters $v$. When $q_{v,i,j}$ takes on this value for all $v$, the probability that $a_i$ dominates $a_j$ is at least the probability that exactly $(k+1)/2$ voters prefer $a_i$ to $a_j$. The latter probability is 
\begin{align*}
&\binom{k}{\frac{k+1}{2}}\left(c\left(\frac{\log n}{n}\right)^{\frac{1}{k+1}}\right)^{\frac{k+1}{2}}\left(1-c\left(\frac{\log n}{n}\right)^{\frac{1}{k+1}}\right)^{\frac{k-1}{2}} \\\
&\geq \binom{k}{\frac{k-1}{2}}\left(c\left(\frac{\log n}{n}\right)^{\frac{1}{k+1}}\right)^{\frac{k+1}{2}}\left(\frac{1}{2}\right)^{\frac{k-1}{2}} \\
&\geq \left(\frac{k}{\frac{k-1}{2}}\right)^{\frac{k-1}{2}}\left(c\left(\frac{\log n}{n}\right)^{\frac{1}{k+1}}\right)^{\frac{k+1}{2}}\left(\frac{1}{2}\right)^{\frac{k-1}{2}} \\
&\geq 2^{\frac{k-1}{2}}c^{\frac{k+1}{2}}\sqrt{\frac{\log n}{n}}\left(\frac{1}{2}\right)^{\frac{k-1}{2}} \\
&\geq c\sqrt{\frac{\log n}{n}},
\end{align*}
where we use the assumption $n\geq (2c)^{2k+2}$ for the first inequality and the approximation $\binom{n}{k}\geq\left(n/k\right)^k$ for the second inequality.
Using Theorem~\ref{thm:UCall}, we have that $\uc(T)=A$ with high probability, as desired. 
\end{proof}

\section{Experiments}
\label{sec:experiments}
To complement our theoretical results, in this section we investigate the asymptotic behavior of random tournaments according to the Condorcet random model as well as another more realistic model that we call the \emph{gap model}. 

\subsection{Condorcet Random Model}

Starting from a set of alternatives $\{a_1, a_2, \dots ,a_n \}$,
 we generate random tournaments according to the Condorcet random model by inserting for each pair of alternatives $a_i,a_j$ with $i>j$ an edge from $a_i$ to $a_j$ with probability $p$ and an edge in the reverse direction with probability $1-p$. The tournament solutions that we consider can all be computed efficiently: A simple counting algorithm suffices to compute \cond, a depth-first search algorithm computes \tc in linear time, and the asymptotic running time for computing \uc equals that of matrix multiplication \citep{Hudr09a}.
In our experimental setup, we draw 10000 random tournaments of each size $n \in \{5,10,20,30, \dots,100\}$ for each $p\in\{0.5,0.3,1/n,1/n^2,\sqrt{2\log n/n}, 0.6\sqrt{\log n/n}\}$ and check for each tournament solution $S \in \{\cond,\uc,\tc\}$ whether it selects all alternatives.\footnote{Our setting is slightly different for the last two values of $p$, as we explain later in this section.}$^{,}$\footnote{Since the probability that \cnl selects all alternatives is equal to the corresponding probability for \cond for any fixed $n$ by symmetry, and $\uc^\infty$ selects all alternatives exactly when \uc does, the results for \cnl and $\uc^\infty$ are captured by those for \cond and \uc, respectively.} Out of that, we compute the percentage of tournaments in which all alternatives are selected. The resulting graphs are displayed in Figure~\ref{fig:condorcet}.


\begin{figure}
	\centering
	\begin{subfigure}{.5\textwidth} 
		\centering
\begin{tikzpicture}[scale=0.85]
\begin{axis}[
    title={(a) $p=0.5$},
    xlabel={},
    ylabel={},
    xmin=5, xmax=100,
    ymin=0, ymax=100,
    xtick= {5,20,40,60,80,100},
    ytick={20,40,60,80,100},
    legend pos=south east,
    ymajorgrids=true,
		xmajorgrids=true,
    grid style=dashed,
]

\addplot[
    color=black,
	  mark size=3,
    mark=triangle*,
  	fill opacity=0.8,
  	draw opacity=0.8,
    ]
    coordinates {
    (5,69.1)(10,98.4)(20,100)(30,100)(40,100)(50,100)(60,100)(60,100)(70,100)(80,100)(90,100)(100,100)
    };
    \addlegendentry{\cond}

\addplot[
	    color=red,
	    mark=square*,
			fill opacity=0.8,
  	  draw opacity=0.8,
	    ]
	    coordinates {
	    (5,53.5)(10,96.4)(20,100)(30,100)(40,100)(50,100)(60,100)(70,100)(80,100)(90,100)(100,100)
	    };
	    \addlegendentry{\tc}

\addplot[
    color=blue,
    mark=otimes*,
		fill opacity=0.8,
  	draw opacity=0.8,
    ]
    coordinates {
    (5,5.8)(10,7.6)(20,46.4)(30,88.4)(40,98.6)(50,99.9)(60,100)(70,100)(80,100)(90,100)(100,100)
    };
    \addlegendentry{\uc}
\end{axis}
\end{tikzpicture}
\end{subfigure}\begin{subfigure}{.5\textwidth} 
	\centering
\begin{tikzpicture}[scale=0.85]
\begin{axis}[
    title={(b) $p=0.3$},
    xlabel={},
    ylabel={},
    xmin=5, xmax=100,
    ymin=0, ymax=100,
    xtick={5,20,40,60,80,100},
    ytick={20,40,60,80,100},
    legend pos=south east,
    ymajorgrids=true,
		xmajorgrids=true,
    grid style=dashed,
]

\addplot[
    color=black,
    mark size=3,
    mark=triangle*,
  	fill opacity=0.8,
  	draw opacity=0.8,
    ]
    coordinates {
    (5,59.4)(10,92.3)(20,99.8)(30,100)(40,100)(50,100)(60,100)(70,100)(80,100)(90,100)(100,100)
    };
    \addlegendentry{\cond}

\addplot[
	    color=red,
        mark=square*,
        fill opacity=0.8,
        draw opacity=0.8,
	    ]
	    coordinates {
	    (5,42.1)(10,86.3)(20,99.6)(30,100)(40,100)(50,100)(60,100)(70,100)(80,100)(90,100)(100,100)	
    	   };
	    \addlegendentry{\tc}

\addplot[
    color=blue,
    mark=otimes*,
    fill opacity=0.8,
  	draw opacity=0.8,
    ]
    coordinates {
     (5,4.3)(10,2.2)(20,5.9)(30,20.6)(40,44.5)(50,69.6)(60,85.6)(70,93.9)(80,97.4)(90,99.0)(100,99.6)
    };
    \addlegendentry{\uc}
\end{axis}
\end{tikzpicture}
\end{subfigure}

\vspace{2mm}

	\centering
	\begin{subfigure}{.5\textwidth} 
		\centering
\begin{tikzpicture}[scale=0.85]
\begin{axis}[
    title={(c) $p=\frac{1}{n}$},
    xlabel={},
    ylabel={},
    xmin=5, xmax=100,
    ymin=0, ymax=100,
    xtick={5,20,20,40,60,80,100},
    ytick={20,40,60,80,100},
    legend pos=south east,
    ymajorgrids=true,
		xmajorgrids=true,
    grid style=dashed,
]

\addplot[
    color=black,
	  mark size=3,
    mark=triangle*,
  	fill opacity=0.8,
  	draw opacity=0.8,
    ]
    coordinates {
    (5,45.3)(10,56.5)(20,60.7)(30,60.9)(40,61)(50,61.4)(60,61.9)(70,62.1)(80,62.4)(90,62.3)(100,62.5)
    };
    \addlegendentry{\cond}

\addplot[
	    color=red,
        mark=square*,
        fill opacity=0.8,
        draw opacity=0.8,
	    ]
	    coordinates {
    (5,27.8)(10,32.7)(20,37)(30,37)(40,37.9)(50,38.3)(60,38.8)(70,38.9)(80,39)(90,39)(100,39.2)
	    };
	    \addlegendentry{\tc}

\addplot[
    color=blue,
    mark=otimes*,
    fill opacity=0.8,
  	draw opacity=0.8,
    ]
    coordinates {
    (5,0)(10,0)(20,0)(30,0)(40,0)(50,0)(60,0)(70,0)(80,0)(90,0)(100,0)
    };
    \addlegendentry{\uc}
\end{axis}
\end{tikzpicture}
\end{subfigure}\begin{subfigure}{.5\textwidth} 
	\centering
\begin{tikzpicture}[scale=0.85]
\begin{axis}[
    title={(d) $p= \frac{1}{n^2}$},
    xlabel={},
    ylabel={},
    xmin=5, xmax=100,
    ymin=0, ymax=100,
    xtick={5,20,40,60,80,100},
    ytick={20,40,60,80,100},
    legend pos= north east,
    ymajorgrids=true,
	  xmajorgrids=true,
    grid style=dashed,
]

\addplot[
    color=black,
    mark size=3,
    mark=triangle*,
  	fill opacity=0.8,
  	draw opacity=0.8,
    ]
    coordinates {
    (5,11.5)(10,7.1)(20,4.2)(30,3.4)(40,2.6)(50,1.8)(60,1.7)(70,1.3)(80,1.3)(90,1.1)(100,0.8)
    };
    \addlegendentry{\cond}

\addplot[
	    color=red,
        mark=square*,
        fill opacity=0.8,
        draw opacity=0.8,
	    ]
	    coordinates {
		(5,4.8)(10,1.3)(20,0.4)(30,0.1)(40,0.2)(50,0.1)(60,00)(70,0)(80,0)(90,0)(100,0)
		};
	    \addlegendentry{\tc}

\addplot[
    color=blue,
    mark=otimes*,
    fill opacity=0.8,
  	draw opacity=0.8,
    ]
    coordinates {
    (5,0.2)(10,0)(20,0)(30,0)(40,0)(50,0)(60,0)(70,0)(80,0)(90,0)(100,0)
    };
    \addlegendentry{\uc}
\end{axis}
\end{tikzpicture}
\end{subfigure}

\vspace{2mm}

	\begin{subfigure}{.5\textwidth} 
		\centering
\begin{tikzpicture}[scale=0.85]
\begin{axis}[
    title={(e) $p= \sqrt{\frac{2 \log n}{n}}$},
    xlabel={},
    ylabel={},
    xmin=50, xmax=1000,
    ymin=0, ymax=100,
    xtick={50,200,400,600,800,1000},
    ytick={20,40,60,80,100},
    legend pos=south east,
    ymajorgrids=true,
		xmajorgrids=true,
    grid style=dashed,
]

\addplot[
    color=black,
    mark size=3,
    mark=triangle*,
  	fill opacity=0.8,
  	draw opacity=0.8,
    ]
    coordinates {
    (50,100)(100,100)(150,100)(200,100)(300,100)(400,100)(500,100)(600,100)(700,100)(800,100)(900,100)(1000,100)	
    };
    \addlegendentry{\cond}

\addplot[
        color=red,
        mark=square*,
        fill opacity=0.8,
        draw opacity=0.8,
	    ]
	    coordinates {
    (50,100)(100,100)(150,100)(200,100)(300,100)(400,100)(500,100)(600,100)(700,100)(800,100)(900,100)(1000,100)	
		};
	    \addlegendentry{\tc}

\addplot[
    color=blue,
    mark=otimes*,
    fill opacity=0.8,
  	draw opacity=0.8,
    ]
    coordinates {
    (50,100)(100,100)(150,100)(200,100)(300,100)(400,100)(500,100)(600,100)(700,100)(800,100)(900,100)(1000,100)	
    };
    \addlegendentry{\uc}
\end{axis}
\end{tikzpicture}
\end{subfigure}\begin{subfigure}{.5\textwidth} 
	\centering
\begin{tikzpicture}[scale=0.85]
\begin{axis}[
    title={(f) $p= 0.6 \sqrt{\frac{ \log n}{n}}$},
    xlabel={},
    ylabel={},
    xmin=50, xmax=1000,
    ymin=0, ymax=100,
    xtick={50,200,400,600,800,1000},
    ytick={20,40,60,80,100},
    legend pos=south east,
    ymajorgrids=true,
		xmajorgrids=true,
    grid style=dashed,
]

\addplot[
    color=black,
    mark size=3,
    mark=triangle*,
  	fill opacity=0.8,
  	draw opacity=0.8,
    ]
    coordinates {
    (50,100)(100,100)(150,100)(200,100)(300,100)(400,100)(500,100)(600,100)(700,100)(800,100)(900,100)(1000,100)	
    };
    \addlegendentry{\cond}

\addplot[
        color=red,
        mark=square*,
        fill opacity=0.8,
        draw opacity=0.8,
	    ]
	    coordinates {
    (50,100)(100,100)(150,100)(200,100)(300,100)(400,100)(500,100)(600,100)(700,100)(800,100)(900,100)(1000,100)	
		};
	    \addlegendentry{\tc}

\addplot[
    color=blue,
    mark=otimes*,
    fill opacity=0.8,
  	draw opacity=0.8,
    ]
    coordinates {
    (50,0)(100,0)(150,0)(200,0)(300,0)(400,0)(500,0)(600,0)(700,0)(800,0)(900,0)(1000,0)	
    };
    \addlegendentry{\uc}
\end{axis}
\end{tikzpicture}
\end{subfigure}
\caption{Percentage of tournaments for which the tournament solution chooses the entire set of alternatives in the Condorcet random model, for different values of the probability $p$. The horizontal and vertical axes correspond to the number of alternatives in the tournament and the percentage, respectively. Averages are taken over 10000 runs.}
\label{fig:condorcet}
\end{figure}
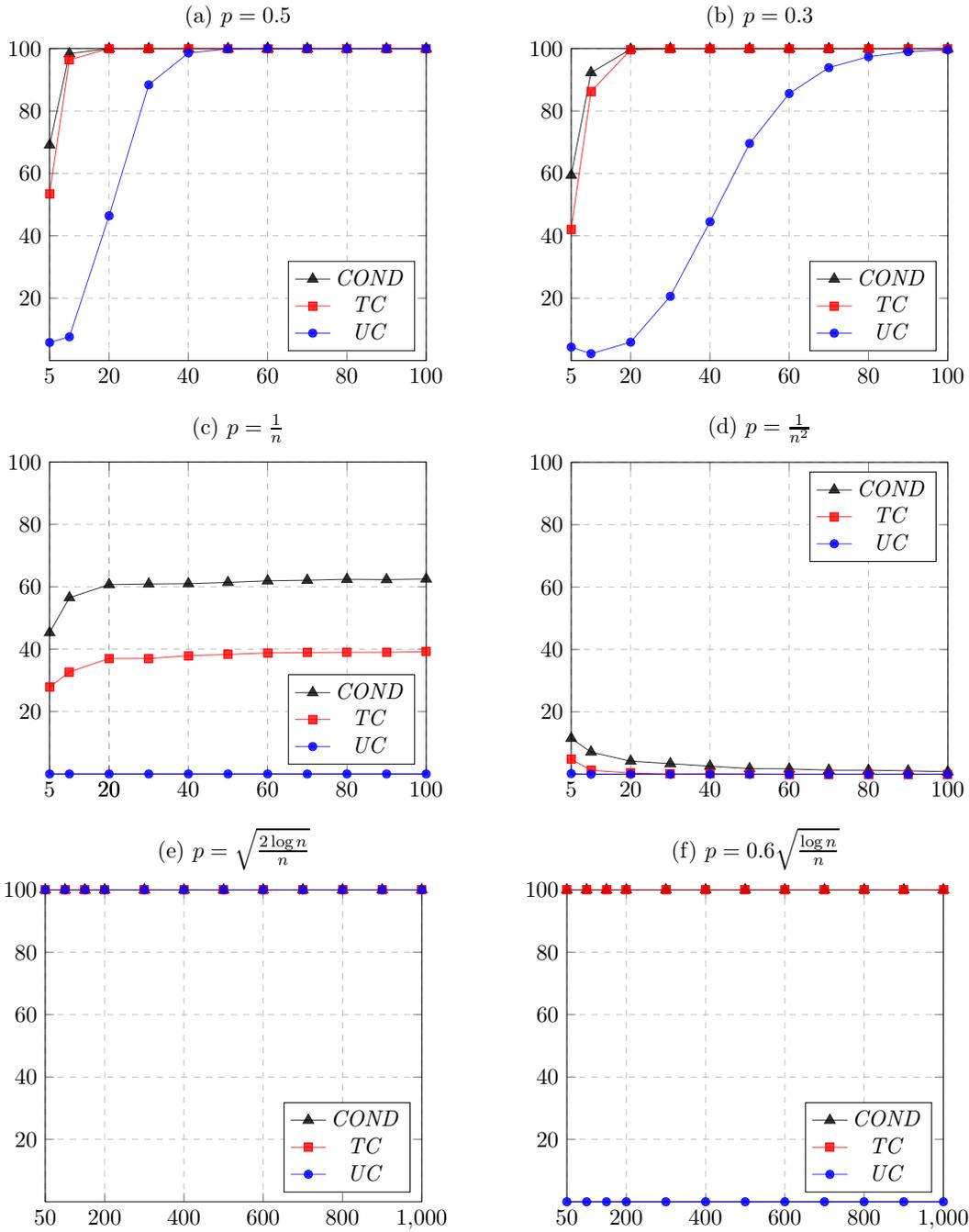

For $p=0.5$, which corresponds to the uniform random model, our experimental results in Figure~\ref{fig:condorcet}(a) coincide with the main theorem of \cite{Fey08a}. The results moreover reveal that \uc chooses all alternatives with high probability in tournaments with at least 50 alternatives while \cond and \tc already do so in much smaller tournaments. As $p$ decreases from 0.5 toward 0, the curves of \cond, \tc, and \uc are shifted to the right; this is to be expected since for smaller $p$ the tournament is more skewed, making it more likely for weaker alternatives to be excluded. Nevertheless, for any fixed $p$ the fraction of tournaments in which all alternatives are chosen approaches 1. In particular, when $p = 0.3$, \uc almost never rules out any alternative in tournaments of size 100 or more (Figure~\ref{fig:condorcet}(b)).

Next, we look at the regimes where the probability $p$ goes to 0 as $n$ approaches infinity. For the case of $p = 1/n$ we find that, in line with Theorem~\ref{thm:CONDnotall}, the probability that \cond selects all alternatives converges to  $1-e^{-1}\approx  0.6321$ (Figure~\ref{fig:condorcet}(c)). Similarly, the probability that \tc selects all alternatives converges to $(1-e^{-1})^2 \approx 0.3996$ for the same value of $p$, confirming a result by \cite{LRG96a}. Letting $p$ approach 0 even faster, we find that for $p = 1/n^2$, both \tc and \cond are discriminative with high probability (Figure~\ref{fig:condorcet}(d)). As $1/n^2 \in o\left(1/n\right)$, this is consistent with Corollary~\ref{cor:TC}. Note that \uc is discriminative for almost all tournaments for both $p=1/n$ and $p=1/n^2$; indeed, this is implied by Corollary~\ref{cor:UC} since already $1/n \in o\left(\sqrt{\log n/n}\right)$.

Finally, we consider the regime $p=\Theta\left(\sqrt{\log n/n}\right)$, which according to Corollary~\ref{cor:UC} is the boundary between \uc almost never ruling out any alternative and almost always ruling out at least one alternative. The experimental setting for $p = c\sqrt{\log n/n}$ with $c \in \{0.6, \sqrt{2} \}$ differs from the previous settings in that we only examined tournaments of size $n \geq 50$, since for small $n$ the expression $\sqrt{2\log n/n}$ is larger than $0.5$, making it unsuitable for our experiments. On the other hand, as $p$ decreases rather slowly, we examined random tournaments up to size 1000 in order to increase the expressive power of our experiments. We find that \cond and \tc select all alternatives with high probability for both values of $c$; this is in line with Corollary~\ref{cor:TC} and the observation that $c\sqrt{\log n/n}\in\omega\left(1/n\right)$.
On the other hand, our experiments indicate that \uc returns all alternatives in almost all tournaments in the case of $p = \sqrt{2\log n/n}$ (Figure \ref{fig:condorcet}(e)) but is discriminative in almost all tournaments when $p = 0.6\sqrt{\log n/n}$ (Figure \ref{fig:condorcet}(f)). These findings coincide with Corollary~\ref{cor:UC} and demonstrate the interesting fact that a small gap in the constant factor constitutes the threshold with regard to the discriminative power of \uc.

\subsection{Gap Model}

As we explained in the introduction, while the Condorcet random model is commonly used in theoretical analyses, it does not properly capture tournaments in the real world since it assigns the same probability to all edges regardless of the difference in strength between the two alternatives adjacent to that edge. We next consider a different model, which we call the \emph{gap model}, that takes this issue into account. 

Like in the Condorcet random model, in the gap model there is a linear order of alternatives from strongest to weakest as well as a parameter $p\leq 0.5$. However, the probability that a stronger alternative dominates a weaker alternative depends linearly on the size of the gap between the two alternatives in the linear order: For two alternatives $a_i,a_j$ with $i<j$, there is an edge from $a_i$ to $a_j$ with probability $0.5 + \frac{(0.5-p)(j-i)}{n-1}$, and an edge in the reverse direction with the remaining probability. In particular, there is an edge from $a_1$ to $a_n$ with probability $1-p$.  We perform experiments on the gap model using the same values of $p$ as we did for the Condorcet random model. The only exception is $p=0.5$ which we replace by $p=0$, the reason being that both models coincide when $p=0.5$. Moreover, we double the sizes of the tournaments considered for the first four values of $p$ in order to better illustrate the convergence behavior. The resulting graphs are displayed in Figure~\ref{fig:gap}.

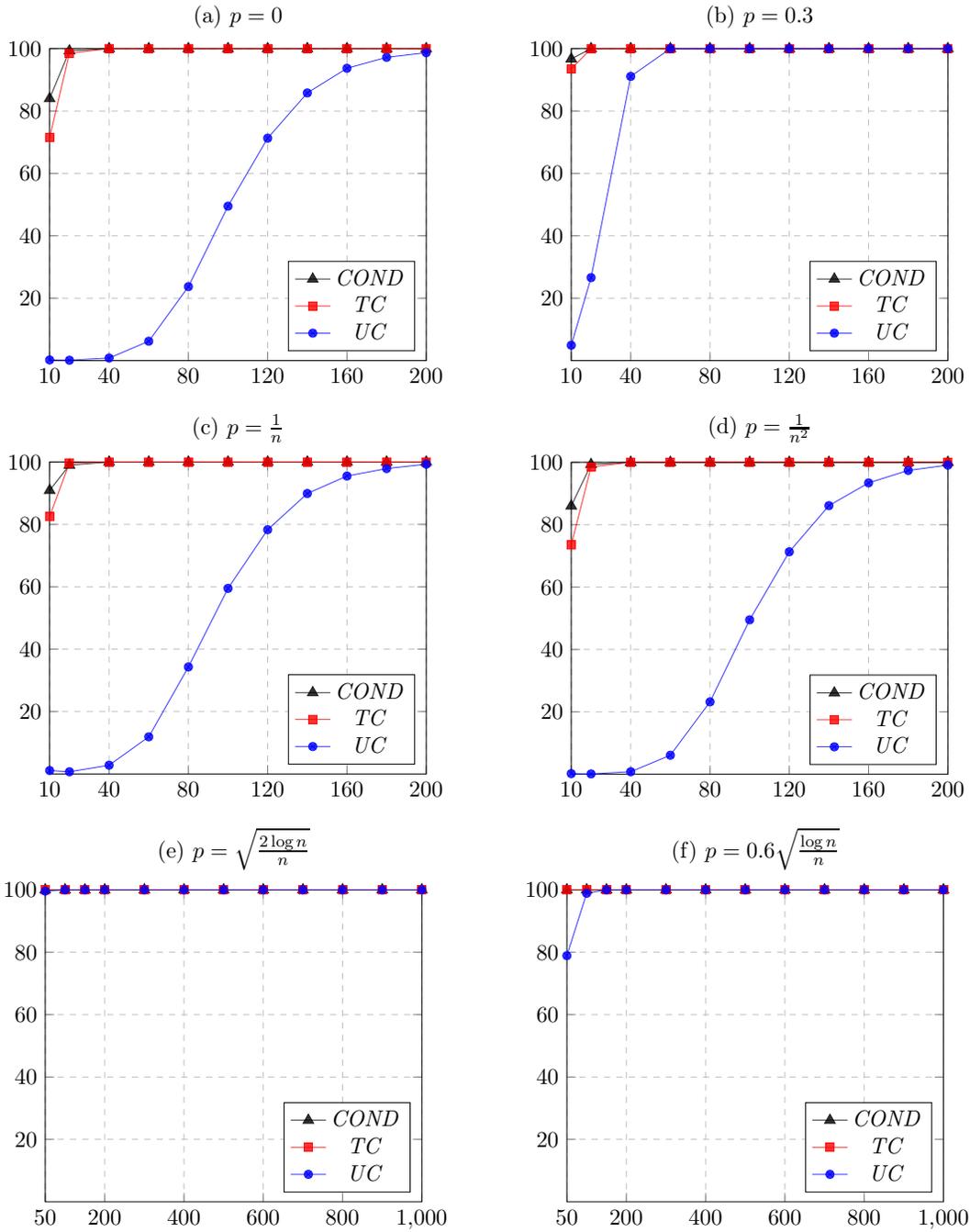
\begin{figure}
	\centering
	\begin{subfigure}{.5\textwidth} 
		\centering
\begin{tikzpicture}[scale=0.85]
\begin{axis}[
    title={(a) $p=0$},
 xlabel={},
 ylabel={},
 xmin=10, xmax=200,
 ymin=0, ymax=100,
 xtick={10,40,80,120,160,200},
 ytick={20,40,60,80,100},
 legend pos=south east,
 ymajorgrids=true,
 xmajorgrids=true,
 grid style=dashed,
]

\addplot[
    color=black,
    mark size=3,
    mark=triangle*,
  	fill opacity=0.8,
  	draw opacity=0.8,
    ]
    coordinates {
    (10,84.0)(20,99.3)(40,100)(60,100)(80,100)(100,100)(120,100)(140,100)(160,100)(180,100)(200,100)
    };
    \addlegendentry{\cond}

\addplot[
        color=red,
        mark=square*,
        fill opacity=0.8,
        draw opacity=0.8,
	    ]
	    coordinates {
		(10,71.5)(20,98.5)(40,100)(60,100)(80,100)(100,100)(120,100)(140,100)(160,100)(180,100)(200,100)
	    };
	    \addlegendentry{\tc}

\addplot[
    color=blue,
    mark=otimes*,
    fill opacity=0.8,
  	draw opacity=0.8,
    ]
    coordinates {
    (10,0.2)(20,0.1)(40,0.8)(60,6.2)(80,23.7)(100,49.5)(120,71.3)(140,85.8)(160,93.7)(180,97.2)(200,98.7)
    };
    \addlegendentry{\uc}
\end{axis}
\end{tikzpicture}
\end{subfigure}\begin{subfigure}{.5\textwidth} 
	\centering
\begin{tikzpicture}[scale=0.85]
\begin{axis}[
    title={(b) $p=0.3$},
    xlabel={},
    ylabel={},
    xmin=10, xmax=200,
    ymin=0, ymax=100,
    xtick={10,40,80,120,160,200},
    ytick={20,40,60,80,100},
    legend pos=south east,
    ymajorgrids=true,
		xmajorgrids=true,
    grid style=dashed,
]

\addplot[
    color=black,
    mark size=3,
    mark=triangle*,
  	fill opacity=0.8,
  	draw opacity=0.8,
    ]
    coordinates {
	(10,96.7)(20,100.0)(40,100)(60,100)(80,100)(100,100)(120,100)(140,100)(160,100)(180,100)(200,100)
    };
    \addlegendentry{\cond}

\addplot[
        color=red,
        mark=square*,
        fill opacity=0.8,
        draw opacity=0.8,
	    ]
	    coordinates {
		(10,93.5)(20,100)(40,100)(60,100)(80,100)(100,100)(120,100)(140,100)(160,100)(180,100)(200,100)
    	   };
	    \addlegendentry{\tc}

\addplot[
    color=blue,
    mark=otimes*,
    fill opacity=0.8,
  	draw opacity=0.8,
    ]
    coordinates {
	(10,4.9)(20,26.6)(40,91.1)(60,100)(80,100)(100,100)(120,100)(140,100)(160,100)(180,100)(200,100)
    };
    \addlegendentry{\uc}
\end{axis}
\end{tikzpicture}
\end{subfigure}

\vspace{2mm}

	\centering
	\begin{subfigure}{.5\textwidth} 
		\centering
\begin{tikzpicture}[scale=0.85]
\begin{axis}[
    title={(c) $p=\frac{1}{n}$},
    xlabel={},
    ylabel={},
    xmin=10, xmax=200,
    ymin=0, ymax=100,
    xtick={10,40,80,120,160,200},
    ytick={20,40,60,80,100},
    legend pos=south east,
    ymajorgrids=true,
		xmajorgrids=true,
    grid style=dashed,
]
\addplot[
    color=black,
    mark size=3,
    mark=triangle*,
  	fill opacity=0.8,
  	draw opacity=0.8,
    ]
    coordinates {
    (10,90.9)(20,99.0)(40,100)(60,100)(80,100)(100,100)(120,100)(140,100)(160,100)(180,100)(200,100)
    };
    \addlegendentry{\cond}

\addplot[
        color=red,
        mark=square*,
        fill opacity=0.8,
        draw opacity=0.8,
	    ]
	    coordinates {
		(10,82.5)(20,99.7)(40,100)(60,100)(80,100)(100,100)(120,100)(140,100)(160,100)(180,100)(200,100)
	    };
	    \addlegendentry{\tc}

\addplot[
    color=blue,
    mark=otimes*,
    fill opacity=0.8,
  	draw opacity=0.8,
    ]
    coordinates {
    (10,1.1)(20,0.7)(40,2.8)(60,11.9)(80,34.3)(100,59.5)(120,78.3)(140,89.9)(160,95.5)(180,97.9)(200,99.3)
    };
    \addlegendentry{\uc}
\end{axis}
\end{tikzpicture}
\end{subfigure}\begin{subfigure}{.5\textwidth} 
	\centering
\begin{tikzpicture}[scale=0.85]
\begin{axis}[
    title={(d) $p= \frac{1}{n^2}$},
 xlabel={},
 ylabel={},
 xmin=10, xmax=200,
 ymin=0, ymax=100,
 xtick={10,40,80,120,160,200},
 ytick={20,40,60,80,100},
 legend pos=south east,
 ymajorgrids=true,
 xmajorgrids=true,
 grid style=dashed,
]

\addplot[
    color=black,
    mark size=3,
    mark=triangle*,
  	fill opacity=0.8,
  	draw opacity=0.8,
    ]
    coordinates {
    (10,86.0)(20,99.3)(40,100)(60,100)(80,100)(100,100)(120,100)(140,100)(160,100)(180,100)(200,100)
    };
    \addlegendentry{\cond}

\addplot[
	    color=red,
        mark=square*,
        fill opacity=0.8,
        draw opacity=0.8,
	    ]
	    coordinates {
		(10,73.5)(20,98.5)(40,100)(60,100)(80,100)(100,100)(120,100)(140,100)(160,100)(180,100)(200,100)
	    };
	    \addlegendentry{\tc}

\addplot[
    color=blue,
    mark=otimes*,
    fill opacity=0.8,
  	draw opacity=0.8,
    ]
    coordinates {
    (10,0.2)(20,0.1)(40,0.8)(60,6.1)(80,23.2)(100,49.5)(120,71.3)(140,86.1)(160,93.4)(180,97.4)(200,99.1)
    };
    \addlegendentry{\uc}
\end{axis}
\end{tikzpicture}
\end{subfigure}

\vspace{2mm}

	\begin{subfigure}{.5\textwidth} 
		\centering
\begin{tikzpicture}[scale=0.85]
\begin{axis}[
    title={(e) $p= \sqrt{\frac{2 \log n}{n}}$},
    xlabel={},
    ylabel={},
    xmin=50, xmax=1000,
    ymin=0, ymax=100,
    xtick={50,200,400,600,800,1000},
    ytick={20,40,60,80,100},
    legend pos=south east,
    ymajorgrids=true,
		xmajorgrids=true,
    grid style=dashed,
]

\addplot[
    color=black,
    mark size=3,
    mark=triangle*,
  	fill opacity=0.8,
  	draw opacity=0.8,
    ]
    coordinates {
    (50,100)(100,100)(150,100)(200,100)(300,100)(400,100)(500,100)(600,100)(700,100)(800,100)(900,100)(1000,100)	
    };
    \addlegendentry{\cond}

\addplot[
        color=red,
        mark=square*,
        fill opacity=0.8,
        draw opacity=0.8,
	    ]
	    coordinates {
    (50,100)(100,100)(150,100)(200,100)(300,100)(400,100)(500,100)(600,100)(700,100)(800,100)(900,100)(1000,100)	
		};
	    \addlegendentry{\tc}

\addplot[
    color=blue,
    mark=otimes*,
    fill opacity=0.8,
  	draw opacity=0.8,
    ]
    coordinates {
    (50,99.6)(100,100)(150,100)(200,100)(300,100)(400,100)(500,100)(600,100)(700,100)(800,100)(900,100)(1000,100)	
    };
    \addlegendentry{\uc}
\end{axis}
\end{tikzpicture}
\end{subfigure}\begin{subfigure}{.5\textwidth} 
	\centering
\begin{tikzpicture}[scale=0.85]
\begin{axis}[
    title={(f) $p= 0.6 \sqrt{\frac{ \log n}{n}}$},
    xlabel={},
    ylabel={},
    xmin=50, xmax=1000,
    ymin=0, ymax=100,
    xtick={50,200,400,600,800,1000},
    ytick={20,40,60,80,100},
    legend pos=south east,
    ymajorgrids=true,
		xmajorgrids=true,
    grid style=dashed,
]

\addplot[
    color=black,
    mark size=3,
    mark=triangle*,
  	fill opacity=0.8,
  	draw opacity=0.8,
    ]
    coordinates {
    (50,100)(100,100)(150,100)(200,100)(300,100)(400,100)(500,100)(600,100)(700,100)(800,100)(900,100)(1000,100)	
    };
    \addlegendentry{\cond}

\addplot[
        color=red,
        mark=square*,
        fill opacity=0.8,
        draw opacity=0.8,
	    ]
	    coordinates {
    (50,100)(100,100)(150,100)(200,100)(300,100)(400,100)(500,100)(600,100)(700,100)(800,100)(900,100)(1000,100)	
		};
	    \addlegendentry{\tc}

\addplot[
    color=blue,
    mark=otimes*,
    fill opacity=0.8,
  	draw opacity=0.8,
    ]
    coordinates {
    (50,78.9)(100,98.9)(150,100)(200,100)(300,100)(400,100)(500,100)(600,100)(700,100)(800,100)(900,100)(1000,100)	
    };
    \addlegendentry{\uc}
\end{axis}
\end{tikzpicture}
\end{subfigure}
\caption{Percentage of tournaments for which the tournament solution chooses the entire set of alternatives in the gap model, for different values of the probability $p$. The horizontal and vertical axes correspond to the number of alternatives in the tournament and the percentage, respectively. Averages are taken over 10000 runs.}
\label{fig:gap}
\end{figure}

We now make some observations. Firstly, in Figure~\ref{fig:gap}(b), we find that for $p=0.3$ all three tournament solutions are unlikely to exclude any alternative in large tournaments; this is consistent with Theorems~\ref{thm:TCall} and \ref{thm:UCall}. The same phenomenon occurs for $p=\sqrt{2\log n/n}$ (Figure~\ref{fig:gap}(e)). We remark that these phenomena cannot be explained by any theoretical result prior to our work. In the remaining figures, we likewise find that in the gap model, all tournament solutions cease to be discriminative as the size of the tournament grows. This is the case even for the extreme value $p=0$ (Figure~\ref{fig:gap}(a)); the Condorcet random model for this value of $p$ clearly always produces a Condorcet winner. Intuitively, the reason that all alternatives are likely to be chosen is that in the gap model, the overall difference in strength between alternatives is significantly less than in the Condorcet random model. Note that the observations for the latter values of $p$ are not captured by our results, since our theorems require all edge probabilities to be in the range $[p,1-p]$ for appropriate values of $p$. Indeed, an intriguing direction for future work would be to generalize our results so that some edge probabilities are allowed to be outside of this range.\footnote{A result of this flavor has been shown by \cite{KSV17a} in the context of single-elimination winners.}

\section{Conclusion}

In this paper, we investigate the behavior of a number of tournament solutions in large random tournaments under a general probabilistic model. We establish tight asymptotic bounds on the boundary of the probability range for which each tournament solution is unlikely to exclude any alternative. In particular, we illustrate a difference between the discriminative power of the top cycle and the uncovered set; this difference is not evident in previous studies that focused on more restricted models. Indeed, while both tournament solutions include all alternatives with high probability in the uniform random model, our results suggest that the uncovered set is in fact considerably more discriminative than the top cycle.

Our work leaves many interesting open questions for future study. A natural next step would be to investigate the asymptotic behavior of other tournament solutions that have been previously studied in the uniform random model---including the Banks set \citep{Fey08a}, the minimal covering set \citep{ScFe11a}, and the bipartisan set \citep{FiRy95a}---using our general probabilistic model. For instance, it is conceivable that the approach used by \cite{Fey08a} to show that the Banks set almost never rules out any alternative in the uniform random model can be extended to establish an analogous statement when each edge probability is drawn from some constant range. It is not clear, however, whether the approach would still work if we allow the range to depend on the number of alternatives in the tournament like we do in the current work.

From a broader point of view, we believe that an important direction is to apply our model to other tournament problems beyond those concerning tournament solutions, for example the problem of finding a dominating set of minimum size. It is well-known that a dominating set of size at most $\log_2(n+1)$ always exists and can be found using a simple greedy algorithm. While a dominating set can be as small as a singleton in tournaments that admit a Condorcet winner, \cite{ScFe11a} showed that for uniform random tournaments, a dominating set of logarithmic size is the best that one can hope for. More precisely, these authors showed that given any constant $0<c<1$, the smallest dominating set of a tournament chosen uniformly at random contains at least $c\log_2n$ alternatives with high probability. Establishing a similar result in our general probabilistic model is an intriguing technical challenge that would allow us to better understand the behavior of such structures in the real world.

\section*{Acknowledgments}

This material is based upon work supported by the Deutsche Forschungsgemeinschaft under grant {BR~2312/11-1}, by the European Research Council (ERC) under grant number 639945 (ACCORD), and by a Stanford Graduate Fellowship. The majority of the work was done while the second author was a PhD student at Stanford University and visiting the Technical University of Munich. A preliminary version of the paper appeared in Proceedings of the 14th Conference on Web and Internet Economics. The authors thank Felix Brandt, Pasin Manurangsi, and Fedor Petrov for helpful discussions.

\bibliography{../pamas/abb,../pamas/group}

\begin{thebibliography}{31}
\providecommand{\natexlab}[1]{#1}
\providecommand{\url}[1]{\texttt{#1}}
\expandafter\ifx\csname urlstyle\endcsname\relax
  \providecommand{\doi}[1]{doi: #1}\else
  \providecommand{\doi}{doi: \begingroup \urlstyle{rm}\Url}\fi

\bibitem[Allesina and Levine(2011)]{AlLe11a}
S.~Allesina and J.~M. Levine.
\newblock A competitive network theory of species diversity.
\newblock \emph{Proceedings of the National Academy of Sciences (PNAS)},
  108\penalty0 (14):\penalty0 5638--5642, 2011.

\bibitem[Arrow and Raynaud(1986)]{ArRa86a}
K.~J. Arrow and H.~Raynaud.
\newblock \emph{Social Choice and Multicriterion Decision-Making}.
\newblock MIT Press, 1986.

\bibitem[Bell(1981)]{Bell81a}
C.~E. Bell.
\newblock A random voting graph almost surely has a {H}amiltonian cycle when
  the number of alternatives is large.
\newblock \emph{Econometrica}, 49\penalty0 (6):\penalty0 1597--1603, 1981.

\bibitem[Bouyssou(2004)]{Bouy04a}
D.~Bouyssou.
\newblock Monotonicity of `ranking by choosing': {A} progress report.
\newblock \emph{Social Choice and Welfare}, 23\penalty0 (2):\penalty0 249--273,
  2004.

\bibitem[Brandt and Seedig(2016)]{BrSe15a}
F.~Brandt and H.~G. Seedig.
\newblock On the discriminative power of tournament solutions.
\newblock In \emph{Selected Papers of the International Conference on
  Operations Research, OR2014}, Operations Research Proceedings, pages 53--58.
  Springer-Verlag, 2016.

\bibitem[Brandt et~al.(2016)Brandt, Brill, and Harrenstein]{BBH15a}
F.~Brandt, M.~Brill, and P.~Harrenstein.
\newblock Tournament solutions.
\newblock In F.~Brandt, V.~Conitzer, U.~Endriss, J.~Lang, and A.~D. Procaccia,
  editors, \emph{Handbook of Computational Social Choice}, chapter~3. Cambridge
  University Press, 2016.

\bibitem[Brandt et~al.(2018)Brandt, Brill, Seedig, and Suksompong]{BBSS14a}
F.~Brandt, M.~Brill, H.~G. Seedig, and W.~Suksompong.
\newblock On the structure of stable tournament solutions.
\newblock \emph{Economic Theory}, 65\penalty0 (2):\penalty0 483--507, 2018.

\bibitem[Cormen et~al.(2009)Cormen, Leiserson, Rivest, and Stein]{CLRS09a}
T.~H. Cormen, C.~E. Leiserson, R.~L. Rivest, and C.~Stein.
\newblock \emph{Introduction to Algorithms}.
\newblock MIT Press, 3rd edition, 2009.

\bibitem[Fey(2008)]{Fey08a}
M.~Fey.
\newblock Choosing from a large tournament.
\newblock \emph{Social Choice and Welfare}, 31\penalty0 (2):\penalty0 301--309,
  2008.

\bibitem[Fisher and Ryan(1995)]{FiRy95a}
D.~C. Fisher and J.~Ryan.
\newblock Tournament games and positive tournaments.
\newblock \emph{Journal of Graph Theory}, 19\penalty0 (2):\penalty0 217--236,
  1995.

\bibitem[Frank(1968)]{Fran68a}
O.~Frank.
\newblock Stochastic competition graphs.
\newblock \emph{Review of the International Statistical Institute}, 36\penalty0
  (3):\penalty0 319--326, 1968.

\bibitem[Good(1971)]{Good71a}
I.~J. Good.
\newblock A note on {C}ondorcet sets.
\newblock \emph{Public Choice}, 10\penalty0 (1):\penalty0 97--101, 1971.

\bibitem[Hudry(2009)]{Hudr09a}
O.~Hudry.
\newblock A survey on the complexity of tournament solutions.
\newblock \emph{Mathematical Social Sciences}, 57\penalty0 (3):\penalty0
  292--303, 2009.

\bibitem[Kim et~al.(2017)Kim, Suksompong, and {Vassilevska Williams}]{KSV17a}
M.~P. Kim, W.~Suksompong, and V.~{Vassilevska Williams}.
\newblock Who can win a single-elimination tournament?
\newblock \emph{SIAM Journal on Discrete Mathematics}, 31\penalty0
  (3):\penalty0 1751--1764, 2017.

\bibitem[Landau(1953)]{Land53a}
H.~G. Landau.
\newblock On dominance relations and the structure of animal societies:
  {III}.~{T}he condition for a score structure.
\newblock \emph{Bulletin of Mathematical Biophysics}, 15\penalty0 (2):\penalty0
  143--148, 1953.

\bibitem[Laslier(1997)]{Lasl97a}
J.-F. Laslier.
\newblock \emph{Tournament Solutions and Majority Voting}.
\newblock Springer-Verlag, 1997.

\bibitem[Laslier(2010)]{Lasl10a}
J.-F. Laslier.
\newblock In silico voting experiments.
\newblock In J.-F. Laslier and M.~R. Sanver, editors, \emph{Handbook on
  Approval Voting}, chapter~13, pages 311--335. Springer-Verlag, 2010.

\bibitem[{\L}uczak et~al.(1996){\L}uczak, Ruci\'{n}ski, and Gruszka]{LRG96a}
T.~{\L}uczak, A.~Ruci\'{n}ski, and J.~Gruszka.
\newblock On the evolution of a random tournament.
\newblock \emph{Discrete Mathematics}, 148\penalty0 (1--3):\penalty0 311--316,
  1996.

\bibitem[Maurer(1980)]{Maur80a}
S.~B. Maurer.
\newblock The king chicken theorems.
\newblock \emph{Mathematics Magazine}, 53:\penalty0 67--80, 1980.

\bibitem[Miller(1977)]{Mill77a}
N.~R. Miller.
\newblock Graph-theoretic approaches to the theory of voting.
\newblock \emph{American Journal of Political Science}, 21\penalty0
  (4):\penalty0 769--803, 1977.

\bibitem[Miller(1980)]{Mill80a}
N.~R. Miller.
\newblock A new solution set for tournaments and majority voting: Further
  graph-theoretical approaches to the theory of voting.
\newblock \emph{American Journal of Political Science}, 24\penalty0
  (1):\penalty0 68--96, 1980.

\bibitem[Moon(1968)]{Moon68a}
J.~W. Moon.
\newblock \emph{Topics on Tournaments}.
\newblock Holt, Reinhard and Winston, 1968.

\bibitem[Moon and Moser(1962)]{MoMo62a}
J.~W. Moon and L.~Moser.
\newblock Almost all tournaments are irreducible.
\newblock \emph{Canadian Mathematical Bulletin}, 5:\penalty0 61--65, 1962.

\bibitem[Moulin(1986)]{Moul86a}
H.~Moulin.
\newblock Choosing from a tournament.
\newblock \emph{Social Choice and Welfare}, 3\penalty0 (4):\penalty0 271--291,
  1986.

\bibitem[Reid(1982)]{Reid82a}
K.~B. Reid.
\newblock Every vertex a king.
\newblock \emph{Discrete Mathematics}, 38\penalty0 (1):\penalty0 93--98, 1982.

\bibitem[Schjelderup-Ebbe(1922)]{Schj22a}
T.~Schjelderup-Ebbe.
\newblock Beitr{\"a}ge zur {S}ozialpsychologie des {H}aushuhns.
\newblock \emph{Zeitschrift f{\"u}r Psychologie}, 88:\penalty0 225--252, 1922.

\bibitem[Schwartz(1972)]{Schw72a}
T.~Schwartz.
\newblock Rationality and the myth of the maximum.
\newblock \emph{No{\^u}s}, 6\penalty0 (2):\penalty0 97--117, 1972.

\bibitem[Scott and Fey(2012)]{ScFe11a}
A.~Scott and M.~Fey.
\newblock The minimal covering set in large tournaments.
\newblock \emph{Social Choice and Welfare}, 38\penalty0 (1):\penalty0 1--9,
  2012.

\bibitem[Slater(1961)]{Slat61a}
P.~Slater.
\newblock Inconsistencies in a schedule of paired comparisons.
\newblock \emph{Biometrika}, 48\penalty0 (3--4):\penalty0 303--312, 1961.

\bibitem[Ushakov(1976)]{Usha76a}
I.~A. Ushakov.
\newblock The problem of choosing the preferred element: {A}n application to
  sport games.
\newblock In R.~E. Machol, S.~P. Ladany, and D.~G. Morrison, editors,
  \emph{Management Science in Sports}, pages 153--161. North-Holland, 1976.

\bibitem[{Vassilevska Williams}(2010)]{Will10a}
V.~{Vassilevska Williams}.
\newblock Fixing a tournament.
\newblock In \emph{Proceedings of the 24th AAAI Conference on Artificial
  Intelligence (AAAI)}, pages 895--900. AAAI Press, 2010.

\end{thebibliography}

\appendix

\section{Appendix}
\subsection{Proof of Lemma~\ref{lem:majorization-sum}}

Before we prove the lemma, we first give a characterization of when one vector majorizes another. To this end, we define the notion of an equalizing move, which involves taking two components of a vector and bringing them ``closer together''.

\begin{definition}
Given a vector $\textbf{x}$, an \emph{equalizing move} on $\textbf{x}$ takes two components $x_i>x_j$ and replaces them by $x_i-1$ and $x_j+1$, respectively.
\end{definition}

\begin{lemma}
\label{lem:majorization-equivalence}
For any vector $\textbf{x}$, a vector $\textbf{y}$ can be obtained from $\textbf{x}$ by a finite number of equalizing moves if and only if $\textbf{x}\succ\textbf{y}$.
\end{lemma}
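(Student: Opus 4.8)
The plan is to prove both implications of the equivalence, treating the vectors as sorted in nonincreasing order without loss of generality, since both the relation $\succ$ and (after folding in relabelings) reachability by equalizing moves depend only on the sorted forms. The central tool I would use throughout is the subset-sum characterization of the partial sums of a sorted vector: for any vector $\textbf{z}$ and any $M$, we have $\sum_{i=1}^M z_i^\downarrow = \max_{|S|=M}\sum_{s\in S} z_s$.

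For the forward implication it suffices, by transitivity of $\succ$ and induction on the number of moves, to show that a single equalizing move produces a majorized vector: if $\textbf{x}'$ arises from $\textbf{x}$ by replacing $x_i>x_j$ with $x_i-1$ and $x_j+1$, then $\textbf{x}\succ\textbf{x}'$. Condition (ii) of Definition~\ref{def:majorization} is immediate since the total is preserved. For condition (i) I would fix $M$ and bound $\sum_{s\in S} x'_s$ for an arbitrary size-$M$ subset $S$ by a size-$M$ subset sum of $\textbf{x}$; the only nontrivial case is when $S$ contains $j$ but not $i$, where swapping $j$ for $i$ gives a set $S'$ with $\sum_{s\in S'} x_s = \sum_{s\in S} x_s + (x_i-x_j) \ge \sum_{s\in S} x_s + 1 \ge \sum_{s\in S} x'_s$. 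Taking maxima over $S$ and applying the characterization yields $\sum_{i=1}^M (x')_i^\downarrow \le \sum_{i=1}^M x_i^\downarrow$ for every $M$.

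For the backward implication, assume $\textbf{x}\succ\textbf{y}$ with both sorted nonincreasingly and induct on $D=\sum_i |x_i-y_i|$, which is even and equals zero exactly when $\textbf{x}=\textbf{y}$. If $D>0$, I would let $k$ be the least index with $x_k\neq y_k$ (majorization forces $x_k>y_k$) and let $l$ be the least index exceeding $k$ with $x_l<y_l$, which exists because the totals agree. Since $x_k>y_k\ge y_l>x_l$, the pair $(x_k,x_l)$ admits an equalizing move; let $\textbf{x}'$ be the result. Two things must be checked. First, $\textbf{x}'\succ\textbf{y}$: because $\sum_{i=1}^M(x')_i^\downarrow\ge\sum_{i=1}^M x'_i$, it is enough to verify $\sum_{i=1}^M x'_i\ge\sum_{i=1}^M y_i$ for every $M$, and for the only affected prefixes $k\le M<l$ this reduces to $\sum_{i=1}^M(x_i-y_i)\ge 1$, which holds because the surplus $x_k-y_k\ge1$ is uncancelled by any deficit before index $l$. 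Second, $x'_k=x_k-1\ge y_k$ and $x'_l=x_l+1\le y_l$, so $D$ decreases by exactly $2$; after re-sorting (which cannot increase the distance to the sorted $\textbf{y}$) the induction hypothesis applies.

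The part I expect to be the main obstacle is the backward direction, specifically choosing the equalizing move so that it simultaneously decreases $D$ and preserves majorization over $\textbf{y}$; the verification that $\textbf{x}'\succ\textbf{y}$ via the prefix-sum bound, together with the bookkeeping of re-sorting between successive moves, is where the care is needed. The forward direction, by contrast, is a routine consequence of the subset-sum characterization.
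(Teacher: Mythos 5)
Your proposal is correct, and its forward direction is essentially the paper's (a single equalizing move preserves the total and cannot increase any prefix sum of the sorted vector; you just make this precise via the subset-sum characterization). Your backward direction, however, takes a genuinely different route. The paper inducts on the dimension $n$ via a minimal-counterexample argument: among sorted pairs with $\textbf{x}\succ\textbf{y}$ but $\textbf{y}$ unreachable, it takes one minimizing $x_1-y_1$, then minimizing the largest index $k$ with $x_k=x_1$; if some proper prefix has equal partial sums it splits the vectors into two blocks and applies the induction hypothesis to each, and otherwise it performs a single equalizing move on the last occurrence of the maximum and the first occurrence of the minimum---a choice made precisely so that the vector stays sorted---and contradicts minimality. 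You instead induct on the $\ell_1$-distance $D=\sum_i|x_i-y_i|$, exhibiting a move from the first surplus index $k$ to the first deficit index $l$ that preserves $\succ$ (your prefix-sum verification for $k\leq M<l$ is exactly right) and decreases $D$ by exactly $2$. Your route is constructive rather than by contradiction, and it yields the quantitative bonus that at most $\frac{1}{2}\sum_i|x_i-y_i|$ moves suffice; the price is the bookkeeping that the paper's choice of move avoids: your first-surplus/first-deficit move can break sortedness, so you need the re-sorting step, whose parenthetical justification (sorting $\textbf{x}'$ cannot increase its $\ell_1$-distance to the sorted $\textbf{y}$) rests on the standard swap inequality $|a-c|+|b-d|\leq|a-d|+|b-c|$ for $a\leq b$, $c\leq d$, together with strong induction in $D$ since re-sorting may shrink the distance further. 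Finally, your opening remark about identifying vectors up to relabeling of components is worth keeping: the lemma is false for literally ordered vectors (e.g., $(1,3)$ is not reachable from $(3,1)$ although $(3,1)\succ(1,3)$), and the paper relies on the same convention silently by assuming both vectors sorted.
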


\begin{proof}
The direction from left to right follows from the observation that an equalizing move never decreases the sum of the $j$ highest components of the vector for any $j=1,2,\dots,n-1$, and leaves the sum of all components invariant. 

For the converse direction, we proceed by induction on $n$, the number of components of the vectors. The base case $n=1$ holds trivially. Assume that the statement holds when there are at most $n-1$ components. To prove the statement when there are $n$ components, assume for contradiction that it does not hold for some pairs $\textbf{x},\textbf{y}$, i.e., $\textbf{x}\succ\textbf{y}$ but $\textbf{y}$ cannot be obtained from $\textbf{x}$ by a finite number of equalizing moves. For each pair assume without loss of generality that $x_1\geq\dots\geq x_n$ and $y_1\geq\dots\geq y_n$, and consider the pairs that minimize the difference $x_1-y_1$ among all such pairs; this difference is guaranteed to be nonnegative by condition (i) of Definition~\ref{def:majorization}. Among all of the pairs under consideration, take one that minimizes the largest index $k$ such that $x_k=x_1$. Let $l$ be the smallest index such that $\sum_{i=1}^l x_i=\sum_{i=1}^l y_i$; the existence of $l$ is guaranteed by condition (ii) of Definition~\ref{def:majorization}. If $l<n$, we can apply the induction hypothesis on the first $l$ components and the last $n-l$ components separately to obtain $\textbf{y}$ from $\textbf{x}$ using a finite number of equalizing moves, which would be a contradiction. Hence we may assume that $l=n$. In particular, $\sum_{i=1}^j x_i\geq \sum_{i=1}^j y_i+1$ for all $j=1,2,\dots,n-1$.

Let $m$ be the smallest index such that $x_m=x_n$. If $x_k=x_m$ or $x_k=x_m+1$, then the only vector with nonincreasing components that is majorized by $\textbf{x}$ is $\textbf{x}$ itself, a contradiction. So $x_k\geq x_m+2$, and we may replace $(x_k,x_m)$ by $(x_k-1,x_m+1)$ in an equalizing move. Let $\textbf{x}'=(x_1',x_2',\dots,x_n')$ be the vector resulting from this move, i.e., $x_k'=x_k-1$, $x_m'=x_m+1$, and $x_i'=x_i$ for all $i\not\in\{k,m\}$. By definition of $k$ and $m$, we have $x_1'\geq\dots\geq x_n'$. Moreover, $\sum_{i=1}^n x_i'=\sum_{i=1}^n y_i$, and for any $j=1,2,\dots,n-1$ we have $\sum_{i=1}^j x_i'\geq \sum_{i=1}^j x_i-1\geq \sum_{i=1}^j y_i$. This means that $\textbf{x}'\succ\textbf{y}$. If $k=1$, we have $x_1'-y_1<x_1-y_1$, which means that we can make a sequence of equalizing moves on $\textbf{x}'$ to obtain $\textbf{y}$, a contradiction. Else, if $k>1$, then we have $x_1'-y_1=x_1-y_1$ and $x_k'<x_1'$, so we can again make a sequence of equalizing moves on $\textbf{x}'$ to obtain $\textbf{y}$ and arrive at a contradiction, completing our proof.
\end{proof}

We now proceed to the proof of Lemma~\ref{lem:majorization-sum}. Suppose that $\textbf{x}\succ\textbf{y}$, and fix $k\in\{1,2,\dots,n\}$. By Lemma~\ref{lem:majorization-equivalence}, there exists a sequence of equalizing moves that takes $\textbf{x}$ to $\textbf{y}$. It suffices to show that if an equalizing move takes $\textbf{x}$ to $\textbf{x}'$, then there is a corresponding sequence of equalizing moves that takes $\textbf{x}^{(k)}$ to $\textbf{x}'^{(k)}$. Indeed, if this is true, then the sequence of equalizing moves that takes $\textbf{x}$ to $\textbf{y}$ gives rise to a corresponding sequence of equalizing moves that takes $\textbf{x}^{(k)}$ to $\textbf{y}^{(k)}$. By Lemma~\ref{lem:majorization-equivalence} again, this will imply that $\textbf{x}^{(k)}\succ \textbf{y}^{(k)}$.

Consider an equalizing move that takes $\textbf{x}$ to $\textbf{x}'$; assume that the move replaces the components $x_i>x_j$ by $x_i-1$ and $x_j+1$, respectively. Note that the only components that change between $\textbf{x}^{(k)}$ and $\textbf{x}'^{(k)}$ are the ones that contain exactly one of $x_i$ and $x_j$ in their sum. These components can be paired up in such a way that for each pair, one component contains $x_i$, the other component contains $x_j$, and both components contain exactly the same subset of the remaining $x_l$'s with $l\not\in\{i,j\}$. For each pair, replacing $x_i$ and $x_j$ by $x_i-1$ and $x_j+1$ corresponds to an equalizing move. It follows that there exists a sequence of equalizing moves that takes $\textbf{x}^{(k)}$ to $\textbf{x}'^{(k)}$, as claimed.

\subsection{Proof of Lemma~\ref{lem:tournament-majorize}}

Note that $(\deg_W(d_1),\deg_W(d_2),\dots,\deg_W(d_n))=(n-1,n-2,\dots,0)$. We verify that both conditions in Definition~\ref{def:majorization} are satisfied.

Fix $k\in\{1,2,\dots,n-1\}$, and assume without loss of generality that $\deg_U(b_1)\geq\dots\geq \deg_U(b_n)$. Let $B=\{b_1,b_2,\dots,b_n\}$ and $B'=\{b_1,b_2,\dots,b_k\}$. The number of edges from an alternative in $B'$ to another alternative in $B'$ is exactly $\binom{k}{2}$. On the other hand, the number of edges from an alternative in $B'$ to an alternative in $B\backslash B'$ is at most $k(n-k)$. It follows that
\begin{align*}
\deg_U(b_1)+\deg_U(b_2)+\dots+\deg_U(b_k)
&\leq \binom{k}{2} + k(n-k) \\
&= k\left(n-\frac{k+1}{2}\right) \\
&= (n-1)+(n-2)+\dots+(n-k),
\end{align*}
so condition (i) is satisfied.

Finally, observe that 
\[
\deg_U(b_1)+\deg_U(b_2)+\dots+\deg_U(b_n)
=\binom{n}{2} 
=(n-1)+(n-2)+\dots+0,
\]
so condition (ii) is also satisfied.

\end{document}